\newtheorem{theorem}{Theorem}
\newtheorem{definition}{Definition}
\DeclareMathOperator*{\argmax}{\arg\!\max}
\DeclareMathOperator{\bwratio}{BWRatio}
\newcommand{\packetsize}{{\sc\textsc{packet-size}}\xspace}
\newcommand{\initialrho}{{\sc\textsc{initial-rho}}\xspace}
\newcommand{\quiettime}{{\sc\textsc{quiet-time}}\xspace}
\newcommand{\buflo} {BuFLO\xspace}
\newcommand{\csbuflo} {Congestion-Sensitive BuFLO\xspace}
\newcommand{\csb} {CS-BuFLO\xspace}
\newcommand{\kibitz}[2]{\ifnum\Comments=1\textcolor{#1}{#2}\fi}
\begin{document}

\title{New Approaches to Website Fingerprinting Defenses}

\author{Xiang Cai, Rishab Nithyanand, Rob Johnson\\Stony Brook University}
\date{}

\maketitle

\begin{abstract}
  Website fingerprinting attacks\cite{hintz-pets02} enable an
  adversary to infer which website a victim is visiting, even if the
  victim uses an encrypting proxy, such as Tor\cite{tor-website}.
  Previous work has shown that all proposed defenses against website
  fingerprinting attacks are ineffective\cite{dyer-snp12,cai-ccs12}.
  This paper advances the study of website fingerprinting attacks and
  defenses in two ways.  First, we develop bounds on the trade-off
  between security and bandwidth overhead that any fingerprinting
  defense scheme can achieve.  This enables us to compare schemes with
  different security/overhead trade-offs by comparing how close they
  are to the lower bound.  We then refine, implement, and evaluate the
  \csbuflo scheme outlined by Cai, et al.~\cite{cai-ccs12}.  \csb,
  which is based on the provably-secure \buflo defense proposed by
  Dyer, et al.\cite{dyer-snp12}, was not fully-specified by Cai, et
  al, but has nonetheless attracted the attention of the Tor
  developers~\cite{perry-critique,perry-tbdesign}.  Our experiments
  find that \csbuflo has high overhead (around 2.3-2.8x) but can get
  6$\times$ closer to the bandwidth/security trade-off lower bound
  than Tor or plain SSH.
\end{abstract}


\section{Introduction}
\label{sec:introduction}

Website fingerprinting attacks have emerged as a serious threat
against web browsing privacy mechanisms, such as SSL, Tor, and
encrypting tunnels.  These privacy mechanisms encrypt the content
transferred between the web server and client, but they do not
effectively hide the size, timing, and direction of packets.  A
website fingerprinting attack uses these features to infer the web
page being loaded by a client.

Researchers have engaged in a war of escalation in developing website
fingerprinting attacks and defenses, with two recent papers
demonstrating that all previously-proposed defenses provide little
security\cite{dyer-snp12,cai-ccs12}.  At the 2012 Oakland conference,
Dyer, et al. showed that an attacker could infer, with a success rate
over 80\%, which of 128 pages a victim was visiting, even if the
victim used network-level countermeasures.  They also performed a
simulation-based evaluation of a hypothetical defense, which they call
\buflo, and found that it required over 400\% bandwidth overhead in
order to reduce the success rate of the best attack to 5\%, which is
still well-above the ideal 0.7\% success rate from random guessing.
At CCS 2012, Cai et al. proposed the DLSVM fingerprinting attack and
demonstrated that it could achieve a greater than 75\% success rate
against numerous defenses\cite{cai-ccs12}, including application-level
defenses, such as HTTPOS\cite{luo-ndss11} and randomized
pipelining\cite{tor-randomized-pipelining}.  As a result, it is not
currently known whether there exists any efficient and secure defense
against website fingerprinting attacks.

Cai, et al. also proposed \csbuflo, which extended Dyer's \buflo
scheme to include congestion sensitivity and some rate adaptation, but
they left many details unspecified and did not implement or evaluate
their scheme.  Despite the lack of data on \csb, the Tor project has
indicated interest in incorporating \csb into the Tor
browser~\cite{perry-tbdesign,perry-critique}.

In order to get a better understanding of the performance and security
of the \csb protocol, this paper presents a complete specification of
\csb, describes an SSH-based implementation, and evaluates its
bandwidth overhead, latency overhead, and security against the current
best-known attacks.

Cai's description of the \csb protocol outlines solutions to several
performance and practicality problems in the original \buflo protocol
-- \csb is TCP-friendly, it pads streams in a uniform way, and it uses
information collected offline to tune \buflo's parameters to the
website being loaded.  We propose two further improvements: we modify
\csb to adapt its transmission rate dynamically, and we improve its
stream padding to use less bandwidth while hiding more information
about the website being loaded.  Dynamic rate adaptation makes \csb
much more practical to deploy, since it does not require an
infrastructure for performing offline collection of statistics about
websites, but poses a challenge: adapting too quickly to the website's
transmission rate can reveal information about which website the
victim is visiting.  \csb balances these performance and security
constraints by limiting the rate and precision of adaptation.



\begin{table*}
\begin{minipage}{\textwidth}                                                                                         
  \begin{center}
  \begin{tabular}{|lrllrrrrr|}
    \hline
    Defense                            & $n$ & Method     & Source                  & Panchenko & VNG++ & DLSVM & BW Ratio & Latency Ratio \\
    \hline
    \hline
    \csb (CTSP)                        & 200 & Empirical  & this paper              & 18.0      & 13.0  & 20.6  & 2.796    & 3.271         \\
    \csb (CPSP)                        & 200 & Empirical  & this paper              & 24.2      & 16.5  & 34.3  & 2.289    & 2.708         \\
    \csb (CTSP)                        & 120 & Empirical  & this paper              & 23.4      & 20.9  & 28.9  & 2.799    & 3.444         \\
    \csb (CPSP)                        & 120 & Empirical  & this paper              & 30.6      & 22.5  & 40.5  & 2.300    & 2.733         \\
    \hline
    \buflo($\tau=0,\rho=40,d=1000$)    & 128 & Simulation & \cite{dyer-snp12}       & 27.3      & 22.0  & N/A   & 1.935    & N/A           \\
    \buflo($\tau=0,\rho=40,d=1500$)    & 128 & Simulation & \cite{dyer-snp12}       & 23.3      & 18.3  & N/A   & 2.200    & N/A           \\
    \buflo($\tau=0,\rho=20,d=1000$)    & 128 & Simulation & \cite{dyer-snp12}       & 20.9      & 15.6  & N/A   & 2.405    & N/A           \\
    \buflo($\tau=0,\rho=20,d=1500$)    & 128 & Simulation & \cite{dyer-snp12}       & 24.1      & 18.4  & N/A   & 3.013    & N/A           \\
    \buflo($\tau=10^5,\rho=40,d=1000$) & 128 & Simulation & \cite{dyer-snp12}       & 14.1      & 12.5  & N/A   & 2.292    & N/A           \\
    \buflo($\tau=10^5,\rho=40,d=1500$) & 128 & Simulation & \cite{dyer-snp12}       & 9.4       & 8.2   & N/A   & 2.975    & N/A           \\
    \buflo($\tau=10^5,\rho=20,d=1000$) & 128 & Simulation & \cite{dyer-snp12}       & 7.3       & 5.9   & N/A   & 4.645    & N/A           \\
    \buflo($\tau=10^5,\rho=20,d=1500$) & 128 & Simulation & \cite{dyer-snp12}       & 5.1       & 4.1   & N/A   & 5.188    & N/A           \\
    \hline
    HTTPOS                             & 100 & Empirical  & \cite{cai-ccs12}        & 57.4      & N/A   & 75.8  & 1.361    & N/A           \\
    Tor+rand. pipe.                    & 100 & Empirical  & \cite{cai-ccs12}        & 62.8      & N/A   & 87.3  & 1.745    & N/A           \\
    Tor                                & 100 & Empirical  & \cite{cai-ccs12}        & 65.4      & N/A   & 83.7  & N/A      & N/A           \\
    Tor                                & 120 & Empirical  & this paper              & 56.3      & 36.8  & 77.4  & 1.247    & 4.583 	\footnote{Note that the high latency of TOR is largely due to its onion routing protocols -- a cost that other defenses do not incur.}	 \\
    Tor                                & 200 & Empirical  & this paper              & 50.1      & 31.8  & 75.1  & 1.244    & 4.919	 	    \\
    Tor                                & 775 & Empirical  & \cite{panchenko-wpes11} & 54.6      & N/A   & N/A   & N/A      & N/A           \\
    Tor                                & 800 & Empirical  & \cite{cai-ccs12}        & 40.1      & N/A   & 50.6  & N/A      & N/A           \\
    SSH                                & 120 & Empirical  & this paper              & 86.5      & 75.0  & 80.7  & 1.128    & 1         \\
    SSH                                & 200 & Empirical  & this paper              & 84.4      & 72.9  & 79.4  & 1.111    & 1         \\
    \hline
  \end{tabular}
  \end{center}
  \end{minipage}
  \caption{Main evaluation results for \csb, and comparison to results
    on other schemes reported in other papers.}
  \label{results-summary}
\end{table*}


We have implemented \csb in a custom version of OpenSSH.  Our
implementation also includes a Firefox browser plugin that informs the
SSH client when the browser has finished loading a web page.  The \csb
implementation uses this information to reduce the amount of padding
performed after the page load has completed.

We evaluate \csb, and compare it to Tor, on the Alexa top 200 websites
in the closed-world setting.  The Alexa top 200 websites represent
approximately 91\% of page loads on the internet~\cite{alexa-top-sites}, 
so these results reflect the security users will obtain when using these
schemes in the real world.  Furthermore, prior work on website
fingerprinting attacks has found that an attackers success rate only goes
down as the number of websites increases, so our results give a high-confidence
upper bounds on the success rate these attacks may achieve in larger
settings.

In our experiments, \csb uses 2.8 times as much bandwidth as SSH (i.e. no
defense) and the best known attack had only a 20\% success rate at
inferring which of 200 websites a victim was visiting.  This is a
substantial improvement over previously-proposed schemes -- the same
attack had a success rate over 75\% against Tor and SSH under the same
conditions.

\Cref{results-summary} compares our results with results reported in
other papers.  These comparisons must be done carefully, since the
experiments used different numbers of websites and methodologies.
Nonetheless, the following conclusions are clear from the data:
\begin{itemize}
  \item \csb hides more information than Tor, SSH, HTTPOS, and Tor
    with randomized pipelining, albeit with higher cost. For example,
    the DLSVM attack has a lower success rate against \csb in a
    closed-world experiment with 100 websites than it has against Tor
    with 800 websites.
    
  \item Overall, \csb achieves approximately the same
    bandwidth/security trade-off in our empirical analysis as \buflo
    achieved in Dyer's simulated evaluation.  For example, \csb in
    CTSP mode had a bandwidth ratio of 2.8 and Panchenko's attack had
    a success rate of 23.4\% on 120 websites.  \buflo with $\tau=0$, 
    $\rho=40$, and $d=1500$ had almost identical security, but a
    bandwidth ratio of 2.2.  Although \csb optimizes many aspects of
    the \buflo protocol, an empirical evaluation presents issues that
    do not arise in a simulation, such as dropped packets,
    retransmissions, and application-level timing dependencies.
\end{itemize}

In addition to the empirical work on \csb, this paper provides an
analytical study of the problem of defending against website
fingerprinting attacks.  We show that constructing an
optimally-efficient defense scheme for a given set of websites is an
NP-hard problem.  We then develop lower bounds on the best possible
trade-off between security and overhead that any website
fingerprinting defense can achieve.  Specifically, given a set of
websites and a desired security level, we can compute a lower bound on
the bandwidth overhead that any defense scheme with that security
level can incur on those websites.  This enables us to compare
defenses that offer different security/bandwidth trade-offs by
comparing how close they are to the lower bound.

The paper concludes by using the lower bounds to compare defenses that
offer different bandwidth/security trade-offs.  We find that \csbuflo
gets over 6$\times$ closer to the bandwidth/security trade-off lower
bound than Tor or plain SSH.  Dyer's reported experiments with \buflo
showed somewhat better trade-off performance, but those results were
based on simulations and are not directly comparable.  Despite the
improvement of \csb over Tor and SSH, there is still a large gap
between the lower bounds and the best defenses.

\begin{figure*}[t]
  \centering
  \includegraphics[clip, trim = 0in 5.5in 0in 0in, width=6.25in]{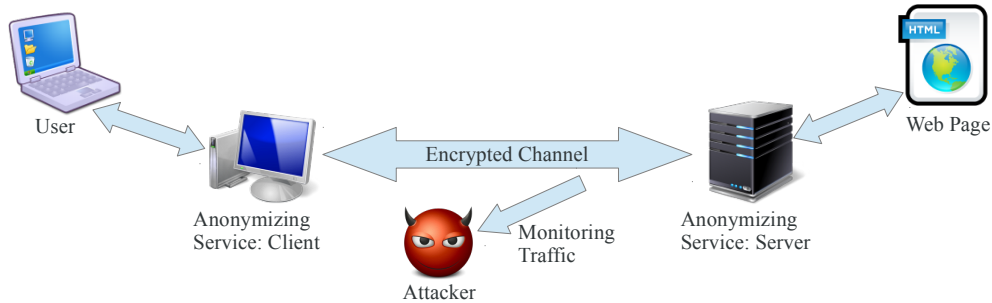}
  \caption{\label{fig:threat-model} Website fingerprinting attack threat model.}
\end{figure*}

In summary, this paper makes the following contributions:
\begin{itemize}
  \item \Cref{sec:theory} provides the first analytical results on
    the website fingerprinting defense problem, showing that
    constructing an optimal defense is NP-hard and discovering lower
    bounds on the best possible trade-off between bandwidth and
    security.
  \item \Cref{sec:design} gives a complete specification of the
    \csb protocol, describing optimizations to make the protocol
    congestion sensitive, rate adaptive, and efficient at hiding
    macroscopic website features, such as total size and the size of
    the last object.
  \item \Cref{subsec:implementation} describes our prototype
    implementation in SSH, which also includes a Firefox plugin to
    notify the proxy when the browser finishes loading a web page.
  \item \Cref{sec:evaluation} presents empirical evaluation results
    for \csb, Tor, and SSH, and shows that \csb provides better
    security, albeit at higher bandwidth costs.  We also show that
    \csb is closer to the lower bound on the security/bandwidth
    trade-off than Tor and SSH.
\end{itemize}


\section{Related Work}
\label{sec:related-work}

\paragraph*{Defenses} Network-level website fingerprinting defenses
pad packets, split packets into multiple packets, or insert dummy
packets.  Dyer, et al., list numerous approaches to padding individual
packets, including pad-to-MTU, pad-to-power-of-two, random padding,
etc.\cite{dyer-snp12}.  They showed that none of the padding schemes
was effective against the attacks they evaluated.  Wright, et al.,
proposed traffic morphing, in which packets are padded and/or
fragmented so that they conform to a specified target
distribution\cite{wright-ndss09}.  Dyer, et al., defeated this
defense, as well\cite{dyer-snp12}.  Lu, et al., extended traffic
morphing to operate on $n$-grams of packet sizes, i.e. their scheme
pads and fragments packets so that $n$-grams of packet sizes match a
target distribution\cite{lu-esorics10}.  Dyer, et al. also proposed
\buflo, which pads or fragments all packets to a fixed size, sends
packets at fixed intervals, injecting dummy packets when necessary,
and always transmits for at least a fixed amount of
time\cite{dyer-snp12}.  They found that they could reduce their best
attack's success rate to 5\% (when guessing from 128 websites), at a
bandwidth overhead of 400\%.  Fu, et al., found in early work
that changes in CPU load can cause slight variations in the time
between packets in schemes that attempt to send packets at fixed
intervals, and recommended randomized inter-packet intervals
instead\cite{fu-iaw03}.

Application-level defenses alter the sequence of HTTP requests and
responses to further obfuscate the user's activity.  For example,
HTTPOS uses HTTP pipelining, HTTP Range requests, dummy requests,
extraneous HTTP headers, multiple TCP connections, and munges TCP
window sizes and maximum segment size (MSS) fields\cite{luo-ndss11}.
Tor has also released an experimental version of Firefox that
randomizes the order in which embedded objects are requested, and the
level of pipelining used by the browser during the
requests\cite{tor-randomized-pipelining}.  Both schemes were defeated
by Cai, et al\cite{cai-ccs12}.

\paragraph*{Attacks}  Researchers have proposed numerous attacks on
basic encrypting tunnels, such as HTTPS, link-level encryption, VPNs,
and IPSec\cite{bissias-pets06, danezis-trafficanalysis, gong-ccs10,
  herrmann-ccsw09, hintz-pets02, liberatore-ccs06, lu-esorics10,
  sun-snp02, yu-tcj11, zhang-wisec11, dyer-snp12}.  These attacks
focus primarily on packet sizes, which carry a lot of information when
no padding scheme is in use.  Herrmann, et al., developed an attack
based on packet sizes that worked well on simple encrypting
tunnels\cite{herrmann-ccsw09}, but performed quite poorly against Tor,
which transmits data in 512-byte cells.  Panchenko, et al., designed
an attack that used packet sizes, along with some ad hoc features
designed to capture higher-level information about the HTTP protocol,
and achieved good success against Tor\cite{panchenko-wpes11}.  Dyer,
et al. performed a comprehensive evaluation of attacks and defenses,
and developed their own attack, called VNG++, that achieved good
success against many network-level defenses\cite{dyer-snp12}.  Cai, et
al., proposed an attack, based on string edit distance, that performs
well against a wide variety of defenses, included application-level
defenses, such as HTTPOS and Tor's randomized
pipelining\cite{cai-ccs12}.  Wang, et al. improved this attack's
performance against Tor by incorporating information about the
structure of the Tor protocol~\cite{wang-wpes13}.
Danezis, Yu, et al., and Cai, et al., all
proposed to use HMMs to extend web page fingerprinting attacks to web
site fingerprinting
attacks\cite{danezis-trafficanalysis,yu-tcj11,cai-ccs12}.

\section{Website Fingerprinting Attacks}
\label{sec:prob-description}

In a website fingerprinting attack, an adversary is able to monitor
the communications between a victim's computer and a private web
browsing proxy, as shown in \Cref{fig:threat-model}.  The private browsing
proxy may be an SSH proxy, VPN server, Tor, or other privacy service.
The traffic between the user and proxy is encrypted, so the attacker
can only see the timing, direction, and size of packets exchanged
between the user and the proxy.  Based on this information, the
attacker attempts to infer the website(s) that the user is visiting
via the proxy.  The attacker can prepare for the attack by collecting
information about websites in advance.  For example, he can visit
websites using the same privacy service as the victim, collecting a
set of website ``fingerprints'', which he later uses to recognize the
victim's site.

Website fingerprinting attacks are an important class of attacks on
private browsing systems.  For example, Tor states that it ``prevents
anyone from learning your location or browsing habits.''\cite{tor-website}  Successful
fingerprinting attacks undermine this security goal.  Fingerprinting
attacks are also a natural fit for governments that monitor their
citizens' web browsing habits.  The government may choose not to (or
be unable to) block the privacy service, but nonetheless wish to infer
citizens' activities when using the service.  Since it can monitor
international network connections, the government is in a good
position to mount website fingerprinting attacks.

Researchers have proposed two scenarios for evaluating website
fingerprinting attacks and defenses: closed-world models and 
open-world models.  A closed-world model consists of a finite number,
$n$, of web pages.  Typical values of $n$ used in past work range from
100 to 800~\cite{dyer-snp12,cai-ccs12,panchenko-wpes11}.  The attacker
can collect traces and train his attack on the websites in the world.
The victim then selects one website uniformly at random, loads it
using some defense mechanism, such as Tor or SSH, and the attacker
attempts to guess which website the victim loaded.  The key
performance metric is the attacker's average success rate.

In an open-world model, there is a population of victims, each of
which may visit any website in the real world, and may select the
website using a probability distribution of their choice. The attacker
does not know any individual victim's distribution over websites, but
has aggregate statistics about website popularity.  The attacker's
goal is to infer which of the victims are visiting a particular
``website of interest'', i.e. an illegal or censored site.  In this
case, the primary evaluation criteria are false positives and false
negatives.  

Perry has critiqued the closed-world model for its
artificiality~\cite{perry-critique}. However, the two models are
connected: Cai, et al., showed how to bootstrap a closed-world attack
into an open-world attack, such that better closed-world performance
yields better open-world performance~\cite{cai-ccs12}.  Thus, although
experiments in the closed-world cannot tell us whether an attack or
defense will be successful in the real world, we can use closed-world
experiments to compare different attacks and defenses.

\section{Theoretical Foundations}
\label{sec:theory}

In this section we focus on understanding the relationship between
bandwidth overhead and security guarantees. We first introduce
definitions of security and overhead for fingerprinting defenses.  We
observe that the overhead required depends on the set of web sites to
be protected -- a set of similar websites can be protected with little
overhead, a set of dissimilar websites requires more overhead.  We
then consider an \emph{offline} version of the website fingerprinting
defense problem, i.e. the defense system knows, in advance, the set of
websites that the user may visit and the packet traces that each
website may generate.  We show that finding a defense system with
optimal overhead in this setting is NP-hard.  We then develop an
efficient dynamic program to compute a lower bound on the bandwidth
overhead of any fingerprinting defense scheme in the closed-world
setting. We will use this algorithm to compute lower bounds on
overhead for the websites used in our evaluation (see
\Cref{sec:evaluation}).

\subsection{Definitions}
\label{subsec:security_def}

In a website fingerprinting attack, the defender selects a website,
$w$, and uses the defense mechanism to load the website, producing a
packet trace, $t$, that is observed by the attacker.  The attacker
then attempts to guess $w$.

Let $W$ be a random variable representing the URL of the website
selected by the defender.  The probability distribution of $W$
reflects the probability that the defender visits each website.  For
each website, $w$, let $T^D_w$ and $T_w$ be the random variables representing the
packet trace generated by loading $w$ with and without defense system $D$, respectively.
Packet traces include the time, direction, and content of each packet.
Since cryptographic attacks are out of scope for this paper, we assume
any encryption functions used by the defense scheme are
information-theoretically secure.  The probability distribution of
$T^D_w$ captures variations in network conditions, changes in
dynamically-generated web pages, randomness in the browser, and
randomness in the defense system.  We assume the attacker knows the
distribution of $W$ and $T^D_w$ for every $w$, so the optimal
attacker, $A$, upon observing trace $t$, always outputs
\[
A(t) = \argmax_w\Pr[W=w]\Pr\left[T^D_w=t\right]
\]
If more than one $w$ attains the maximum, then the attacker chooses
randomly among them.

Some privacy applications require good worst-case performance, and
some only require good average-case performance.  This leads to two
security definitions for website fingerprinting defenses:
\begin{definition}
  Defense $D$ is \textbf{non-uniformly $\epsilon$-secure} if
  $\Pr\left[A(T^D_W)=W\right]\leq\epsilon$.  Defense $D$ is
  \textbf{uniformly $\epsilon$-secure} if
  $\max_w\Pr\left[A(T^D_w)=w\right]\leq\epsilon$.
\end{definition}
These are information-theoretic security definitions -- $A$ is the
optimal attacker described above.  The first definition says that
$A$'s average success rate is less than $\epsilon$, but it does not
require that every website be difficult to recognize.  The second
definition requires all websites to be at least $\epsilon$ difficult
to recognize.  All previous papers on website fingerprinting attacks
and defenses have reported average attack success rates in the
closed-world model, i.e. they have reported non-uniform security
measurements.  We will do the same, although we provide some
comparison with non-uniform security bounds in
\Cref{sec:evaluation}.

To define the bandwidth overhead of a defense system, let $B(t)$ be the
total number of bytes transmitted in trace $t$.  We define the
\textbf{bandwidth ratio} of defense $D$ as
\[
\bwratio_D(W)=\frac{E\left[B\left(T^D_W\right)\right]}{E\left[B\left(T_W\right)\right]}
\]
This definition captures the overall ratio of bandwidth between a user
using defense $D$ for an extended period of time and a user visiting
the same websites with no defense.  

\subsection{Lower Bounds for Bandwidth}\label{subsec:lower_bounds}
In this section we derive an algorithm to compute, given websites
$w_1,\dots,w_n$, a lower bound for the bandwidth that any
deterministic $\epsilon$-secure fingerprinting defense can use in
a closed-world experiment using $w_1,\dots,w_n$.  In a closed-world
experiment, each website occurs with equal probability,
i.e. $\Pr[W=w_i]=\frac{1}{n}$ for all $i$.

To compute a lower bound on bandwidth, we consider an adversary
that looks only at the amount of data transferred by the defense,
i.e. an attacker $A_S$ that always guesses
\[
A_S(t) = \argmax_w\Pr\left[B(T^D_w)=B(t)\right]
\]
Any defense that is $\epsilon$-secure against an arbitrary attacker
must also be at least $\epsilon$-secure against $A_S$.  Thus, if we
can derive a lower bound on defenses that are $\epsilon$-secure
against $A_S$, that lower bound will apply to any $\epsilon$-secure
defense.

We make a few simplifying assumptions in order to obtain an efficient
algorithm for computing lower bounds.  First, we assume that each
website has a unique fixed size, $s_i$.  In our closed world
experiments, we found that, for just over half the web pages in our
dataset, their size had a normalized standard deviation of less than
$0.11$ across $20$ loads, so we do not believe this assumption will
significantly impact the results of our analysis.  Second, we assume
the defense scheme induces a deterministic mapping, $b_i=f(s_i)$, from
the website's original size to the size of the trace observed by the
attacker.  Finally, we assume that the defense mechanism does not
compress or truncate the website, i.e. that $b_i\geq s_i$ for all $i$.

Suppose $f$ is the function induced by such a defense.  Let
$F=\{f(s_1),\dots,f(s_n)\}$.  For any given $b\in F$, let
$n_b=|f^{-1}(b)|$, i.e. the number of websites that cause the defense
mechanism to transmit $b$ bytes.  The probability that the attacker
observes $b$ during a closed world experiment is simply $n_b/n$, and
the probability that the attacker guesses the correct website based on
observation $b$ is $1/n_b$.  Thus the non-uniform security of the
defense scheme is
\[
\sum_{b\in F} \frac{n_b}{n}\frac{1}{n_b}=\frac{|F|}{n}
\]
and the uniform security is $\max_{b\in F} 1/n_b$.  The bandwidth requirements of
the defense is proportional to
\[
\sum_{b\in F}bn_b.
\]
Let $S_b=f^{-1}(b)$.  Since the defense does not compress or truncate
sites, we must have $b\geq \max_{s\in S_b}s$.  For the purposes of computing
lower bounds on the bandwidth, we may as well assume that $b=\max_{s\in S_b} s$.
Thus the function $f$ is equivalent to a partition of the set
$\{s_1,\dots,s_n\}$.

These observations imply that the optimal $f$ must be monotonic.
\begin{theorem}\label{th:contiguous}
  The optimal $f$ is monotonic.
\end{theorem}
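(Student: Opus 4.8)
The plan is to reduce the statement to a clean combinatorial claim and then prove it by an exchange argument organized as an induction on the number of blocks. As the excerpt already establishes, any admissible $f$ is equivalent to a partition of $\{s_1,\dots,s_n\}$, where a block $S_b$ contributes $(\max_{s\in S_b}s)\cdot|S_b|$ to the bandwidth $\sum_{b\in F}b\,n_b$, and its size $n_b$ determines the security. After sorting the sizes as $s_1<\cdots<s_n$, I would first observe that ``$f$ is monotonic'' is equivalent to ``every block of the induced partition is a contiguous run $s_p,s_{p+1},\dots,s_q$'': if the blocks are contiguous runs then $f$ sends each run to its top element and these tops increase, so $f$ is nondecreasing; conversely, if $f$ is monotonic and $s_i<s_k<s_j$ with $f(s_i)=f(s_j)$, then $f(s_i)\le f(s_k)\le f(s_j)$ forces $s_k$ into the same block. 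So it suffices to show that some cost-minimizing partition is contiguous.

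The key structural point is that the natural exchange preserves security exactly. Given an optimal partition, I would fix its block-size multiset $(n_b)_{b\in F}$; any partition with the same multiset has the same $|F|$ and the same $\max_b 1/n_b$, hence the same non-uniform and uniform security. Thus I only need to produce a contiguous partition with this multiset whose cost is no larger, and I can restrict attention to swaps that move one element out of a block and one element into it, which leave all block sizes (and therefore the security level under both definitions) unchanged.

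I would then induct on the number of blocks $k$. Let $B^\ast$ be the block containing the global maximum $s_n$, so $\max B^\ast=s_n$ and $B^\ast$ contributes $s_n|B^\ast|$ regardless of its other elements. The claim for the step is that $B^\ast$ may be taken to be the top $|B^\ast|$ items $\{s_{n-|B^\ast|+1},\dots,s_n\}$. If it is not, some ``top'' item $s_p$ lies in another block $B'$ while some non-top item $s_q<s_p$ lies in $B^\ast$; I swap them. Because both swapped elements are $\le s_n$ and $s_n$ stays in $B^\ast$, the maximum of $B^\ast$ remains $s_n$; because $B'$ loses $s_p$ and gains the smaller $s_q$, its maximum cannot increase; so the cost does not increase. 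Crucially, this swap increases the number of top items inside $B^\ast$ by exactly one, a quantity bounded by $|B^\ast|$, which gives termination. Once $B^\ast$ is the top run, the remaining $k-1$ blocks partition $\{s_1,\dots,s_{n-|B^\ast|}\}$ and must themselves be cost-optimal for that subproblem, so the inductive hypothesis makes them contiguous, and appending the top run yields a fully contiguous (hence monotonic) optimal $f$.

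I expect the main obstacle to be exactly this termination and bookkeeping in the inductive step: one must choose the swap so that a bounded integer potential (here, the count of top items already inside $B^\ast$) strictly increases, since the more obvious potential --- the number of monotonicity-violating pairs --- is not monotone under arbitrary swaps once a block's maximum is allowed to shift. The remaining care is routine: checking that the swap never raises any block's maximum, and that fixing the block-size multiset preserves both security definitions, so that the contiguous partition we construct is genuinely at least as secure and at least as cheap as the original optimum. For the non-uniform definition alone there is an even shorter route --- fix the set of block maxima and reassign each $s_i$ to the least chosen maximum that is $\ge s_i$, which is automatically monotonic and per-item optimal --- but this reassignment can shrink blocks and so does not by itself respect the uniform constraint, which is why I favor the size-preserving exchange above.
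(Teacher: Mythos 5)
Your proof is correct, and it rests on the same structural insight as the paper's: fix the multiset of block sizes (which freezes both the uniform and the non-uniform security level) and show that a contiguous rearrangement can only lower the cost. Where you differ is in how that cost comparison is carried out. The paper does it in one shot: order the blocks by their maxima $m_1\le\cdots\le m_k$, reassign the sorted sizes contiguously into blocks $S^*_1,\dots,S^*_k$ of the same cardinalities, and observe that $m^*_i\le m_i$ termwise --- which holds because $m_i=\max(S_1\cup\cdots\cup S_i)$ is the maximum of a set of $|S_1|+\cdots+|S_i|$ elements and is therefore at least the $(|S_1|+\cdots+|S_i|)$-th smallest size, which is exactly $m^*_i$. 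You instead peel off the block containing $s_n$ by repeated top-item swaps, using the count of top items already inside that block as a termination potential, and recurse on the remaining blocks. Your route is longer, but it effectively supplies the justification that the paper compresses into a single ``Observe that,'' so it is more self-contained; the paper's one-shot sorting buys brevity at the price of leaving that key inequality unproved. One small cleanup: in your inductive step you appeal to the remaining $k-1$ blocks being ``cost-optimal for that subproblem.'' This is true (a cheaper sub-partition would splice back into a cheaper whole), but it requires specifying the constraint class and is unnecessary --- your own second paragraph already reduces the theorem to the claim that \emph{every} partition admits a contiguous partition with the same size multiset and no larger cost, and that claim, stated for arbitrary partitions rather than optimal ones, is exactly what your swap-plus-induction argument proves. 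Phrasing the induction hypothesis that way removes the only loose joint in the argument.
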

\begin{proof}
Consider any partition of $\{s_1,\dots,s_n\}$ into sets $S_1, \dots,
S_k$.  Let $m_i=\max_{s\in S_i} S_i$. Without loss of generality,
assume $m_1 \leq m_2 \leq \dots \leq m_k$. Now consider the monotonic
allocation of traces into sets $S^*_1, \dots, S^*_{k}$ where $|S^*_i|
= |S_i|$. Let $m^*_i=\max_{s\in S^*_i}s$. Observe that $m^*_i \leq
m_i$ for all $i$, i.e. the new allocation has lower bandwidth.

Since the number of sets in the partition and the sizes of those sets
are unchanged, this new allocation has the same uniform and
non-uniform security as the original, but lower bandwidth.  Hence the
optimal $f$ must be monotonic.
\end{proof}

We can compute the optimal partition for a given security parameter
using a dynamic program.  If $S_1, \dots, S_k$ is is an optimal
uniformly $\epsilon$-secure partition, then so is $S_1,\dots,S_{k-1}$.
Thus the cost, $C(\epsilon, n)$ of the optimal uniformly
$\epsilon$-secure partition satisfies the recurrence relation:
\[
C(\epsilon, n)=
\left\{
\begin{array}{ll}
  \infty & \text{if $n<1/\epsilon$} \\
  \displaystyle\min_{1\leq j\leq n-1/\epsilon}C(\epsilon, j)+(n-j)s_n & \text{otherwise.}
\end{array}\right.
\]
Non-uniformly $\epsilon$-secure partitions satisfy a slightly
different recurrence.  If $S_1, \dots, S_k$ is is an optimal
non-uniformly $\frac{k}{n}$-secure partition, then $S_1,\dots,S_{k-1}$
is an optimal non-uniformly $\frac{k-1}{n-|S_k|}$-secure partition.
Therefore the optimal cost, $C'(\frac{k}{n},n)$, satisfies the
recurrence
\[
C'(\frac{k}{n}, n)=
\left\{
\begin{array}{ll}
  ns_n & \text{if $k=1$} \\
  \displaystyle\min_{1\leq j\leq n-1}C'(\frac{k-1}{j}, j)+(n-j)s_n & \text{o.w.}
\end{array}\right.
\]

\Cref{alg:dp} shows a dynamic program for computing a lower bound on
the bandwidth of any deterministic defense that can achieve $\epsilon$
non-uniform security in a closed-world experiment on static websites
with sizes $s_1,\dots,s_n$.  We use this algorithm to compute the
lower bounds reported in \Cref{sec:evaluation}.

\begin{algorithm}[t]
  \caption{Algorithm to compute a lower bound on the bandwidth of any offline non-uniformly $\epsilon$ secure fingerprinting defense against $A^{S}$ attackers.}
  \label{alg:dp}
  \begin{algorithmic}
    \Function{$A^{S}$-min-cost}{$n$, $\epsilon$, $\{s_1, \dots, s_n\}$} 
    \State Array $C[0 \dots n\epsilon,0 \dots n]$
    \For{$i=0,\dots,n\epsilon$}
	    \State $C[i, 0] \gets 0$
    \EndFor
    \For{$i=0,\dots,n$}
      \State $C[0, i] \gets \infty$
    \EndFor
    \For{$i = 1 \to n$}
      \For{$j = 1 \to n\epsilon$}
        \State $C[j, i] = \min_{1\leq \ell \leq i-1} \left[(i-\ell) s_{i} + C[j-1, \ell]\right]$
	    \EndFor
	  \EndFor
	  \State \Return $C[n\epsilon, n]$
    \EndFunction
  \end{algorithmic}
\end{algorithm}

\subsection{Security Against DLSVM Attackers}

We now analyze the task of defending against DLSVM-style
attackers in the same theoretical setting as above.  We will show that
finding the lowest-cost offline defense against a DLSVM attacker
is NP-hard, via a reduction from the binary shortest common
super-sequence problem.  This reduction will also show that the
minimum bandwidth required by an offline defense against a DLSVM
attacker is at most twice the bandwidth lower bound computed in the
previous section.  This result, along with the experimental results in
\Cref{sec:evaluation}, will show that offline defenses can achieve low
cost and high security, suggesting a promising avenue for future
work.

Suppose websites $w_1,\dots,w_n$ are all static and constructed such
that loading each site requires performing a fixed, serialized
sequence of requests and responses, e.g. each web page contains a
javascript program that loads objects one at a time in a fixed order.
Let $d_i[j]=1$ iff the $j$th byte that must be transmitted to load
page $w_i$ is a transmission in the upstream direction.  

Loading website $w_i$ via a deterministic defense mechanism produces a
fixed trace $t_i$. Let $z_i$ be the binary string defined by
$z_i[j]=1$ iff the $j$th byte of $t_i$ is an upstream byte.  Since,
for these websites, the defense mechanisms cannot delete or re-order
bytes, we must have that $d_i$ is a sub-sequence of $z_i$.

When the victim loads a web site, producing trace $t$, the attacker
can compute the corresponding string, $z$.  In order for the attacker
to learn nothing about which web page the victim loaded, we must have
that, for all $i$, $d_i$ is a substring of $z$.
Thus
the defense system must compute some string, $z$, that is
simultaneously a super-sequence of $d_1,\dots,d_n$.  Minimizing the
cost of such a defense is thus equivalent to finding the
shortest common super-sequence (SCS) of $d_1, \dots, d_n$.
This problem is NP-hard\cite{SCS_SR8}.

However, there is a simple 2-approximation for the binary SCS problem.
Let $\ell$ be the length of the longest string $d_1,\dots,d_n$.  Their
SCS must be at least $\ell$ long, but is at most $2\ell$ long, since
every binary string of length at most $\ell$ is a sub-sequence of
$(01)^\ell$.  Thus for any set of static websites $w_1,\dots,w_n$,
there exists a deterministic offline defense that achieves (uniform or
non-uniform) $\epsilon$-security against DLSVM-style attackers and
incurs bandwidth cost that is at most twice the bandwidth lower bound
derived in the previous section.


\section{\csbuflo}
\label{sec:design}

Dyer, et al., described \buflo, a hypothetical defense scheme that
hides all information about a website, except possibly its size, and
performed a simulation-based evaluation that found that, although
\buflo is able to offer good security, it incurs a high cost to do so.

In this section, we describe \csbuflo (\csb), an extension to \buflo that
includes numerous security and efficiency improvements.  \csb
represents a new approach to the design of fingerprinting defenses.
Most previously-proposed defenses were designed in response to known
attacks, and therefore took a black-listing approach to information
leaks, i.e. they tried to hide specific features, such as packet
sizes.  In designing \csb, we take a white-listing approach -- we
start with a design that hides all traffic features, and iteratively
refine the design to reveal certain traffic features that enable us to
achieve significant performance improvements without significantly
harming security.

\subsection{Review of \buflo}

The Buffered Fixed-Length Obfuscator (\buflo) of Dyer, et al.,
transmits a packet of size $d$ bytes every $\rho$ milliseconds, and
continues doing so for at least $\tau$ milliseconds.  If $b<d$ bytes
of application data are available when a packet is to be sent, then
the packet is padded with $d-b$ extra bytes of junk.  The protocol
assumes that the junk bytes are marked so that the receiver can
discard them.  If the website does not finish loading within $\tau$
milliseconds, then \buflo continues transmitting until the website
finishes loading and then stops immediately.  Dyer, et al., did not
specify how \buflo detects when the website has finished loading.
They also did not specify how \buflo handles bidirectional
communication -- presumably independent \buflo instances are run at
each end-point.

\buflo effectively hides everything about the website, except possibly
its size, but has several shortcomings:
\begin{itemize}
  \item It either completely hides the size of the website or
    completely reveals it ($\pm d$ bytes).  Thus it does not provide
    the same level of security to all websites.
  \item \buflo has large overheads for small websites.  Thus its
    overhead is also unevenly distributed.
  \item \buflo is not TCP-friendly.  In fact, it is the epitome of a
    bad network citizen.
  \item \buflo does not adapt when the user is visiting fast or slow
    websites.  It wastes bandwidth when loading slow sites,
    and causes large latency when loading fast websites.
  \item \buflo must be tuned to each user's network connection.  If
    the \buflo bandwidth, $\frac{1000d}{\rho}$ B/s, exceeds the user's
    connection speed, then \buflo will incur additional delay without
    improving security.
  \item Past research by Fu, et al., showed that transmitting at fixed
    intervals can reveal load information at the sender, which an
    attacker can use to infer partial information about the data being
    transmitted\cite{fu-iaw03}.
\end{itemize}
Dyer, et al., proposed \buflo as a straw-man defense system, so it is
understandable that they did not bother addressing these problems.
However, we show below that several of these problems have common
solutions, e.g. we can simultaneously improve overhead and
TCP-friendliness, simultaneously make security and overhead more
uniform, etc.  Thus, as our evaluation will show, \csb may be a
practical and efficient defense for users requiring a high level of
security.

Further, as noted by its authors, \buflo's simulation based results
``reflect an ideal implementation that assumes the feasibility of
implementing fixed packet timing intervals. This is at the very
least difficult and clearly impossible for certain values of $\rho$.
Simulation also ignores the complexities of cross-layer communication 
in the network stack'' \cite{dyer-snp12}. As a result, it remains 
unclear how well the defense performs in the real world.

\subsection{Overview of \csbuflo}

\Cref{alg:csbuflo-server} shows the main loop of the \csb server.  The
client loop is similar, except for the few differences discussed
throughout this section.  Similar to \buflo, \csb delivers
fixed-size chunks of data at semi-regular intervals.  \csb randomizes
the timing of network writes in order to counter the attack of Fu, et
al.\cite{fu-iaw03}, but it maintains a target average inter-packet
time, $\rho^*$.  \csb periodically updates $\rho^*$ to match its
bandwidth to the rate of the sender (\Cref{ssec:rate-adaptation}).
Since updating $\rho^*$ based on the sender's rate reveals information
about the sender, \csb performs these updates infrequently.  \csb uses
TCP to be congestion friendly, and uses feedback from the TCP stack in
order to reduce the amount of junk data it needs to send
(\Cref{ssec:congestion-sensitivity}).  Also like \buflo, \csb
transmits extra junk data after the website has finished loading in
order to hide the total size of the website.  However, \csb uses a
scale-independent padding scheme (\Cref{ssec:stream-padding}) and
monitors the state of the page loading process to avoid some
unnecessary overheads  (\Cref{ssec:early-termination}).

\begin{algorithm}[ht!]
  \caption{The main loop of the \csbuflo server.}
  \label{alg:csbuflo-server}
  \begin{algorithmic}
    \Function{CSBUFLO-Server}{$s$}
      \While{true}
        \State $(m,\rho) = \Call{read-message}{\rho}$
        \If{$m$ is application data from website}
			\State \emph{output-buff} $\gets$ \emph{output-buff} $\|$ \emph{data}
          \State \emph{real-bytes} $\gets$ \emph{real-bytes} + \Call{length}{m}
          \State \emph{last-site-response-time} $\gets$ \Call{current-time}{}
        \ElsIf{$m$ is application data from client}
          \State send $m$ to the website
		  \State \emph{$\rho$-stats} $\gets$ \emph{$\rho$-stats} $\| \perp$ 
          \State \emph{onLoadEvent} $\gets$ 0, \emph{padding-done} $\gets$ 0
        \ElsIf{$m$ is onLoad message}
          \State \emph{onLoadEvent} $\gets$ 1
        \ElsIf{$m$ is padding-done message}
          \State \emph{padding-done} $\gets$ 1
        \ElsIf{$m$ is a time-out}
          \If{\emph{output-buff} is not empty}
		  \State \emph{$\rho$-stats} $\gets$ \emph{$\rho$-stats} $\|$ \Call{current-time}{} 
          \EndIf
          \State (\emph{output-buff}, $j)$ $\gets$ \Call{cs-send}{s, \emph{output-buff}}
          \State \emph{junk-bytes} $\gets$ \emph{junk-bytes} + \emph{j}
        \EndIf
        \If{\Call{done-xmitting}{}}
			\State reset all variables
        \Else\Comment{$\rho^*:$ Average time between sends to client}
          \If{$\rho^*=\infty$} 
            \State $\rho^* \gets \initialrho$
          \ElsIf{\Call{crossed-threshold}{\emph{real-bytes, junk-bytes}}}
            \State $\rho^* \gets \Call{rho-estimator}{\rho$\emph{-stats},$\rho^*}$
            \State $\rho$\emph{-stats} $\gets$ $\emptyset$ 
          \EndIf
          \\
          \If{$m$ is a time-out}
            \State $\rho \gets$ random number in $[0,2\rho^*]$
          \EndIf
        \EndIf
      \EndWhile
    \EndFunction
  \end{algorithmic}
\end{algorithm}

\subsection{Rate Adaptation}
\label{ssec:rate-adaptation}


\begin{figure*}[t]
  \centering
  \includegraphics[width=6in]{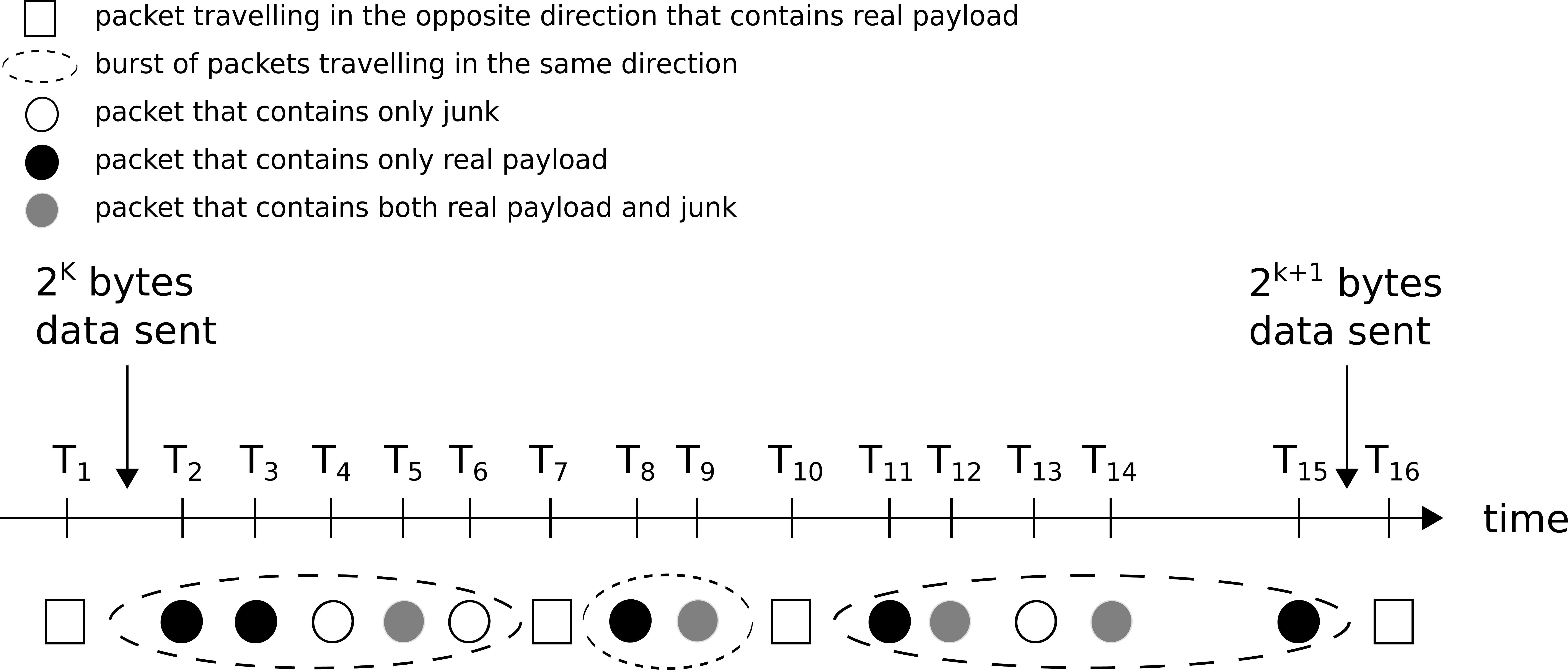}
  \caption{Rate adaptation in \csb. $\rho^*$ is updated based on the packets
	  transmitted to the other end between $T_2$ and $T_{15}$. Time intervals
	  between two consecutive packets are stored in an array $Intervals[]$.
	  The two packets under consideration both contain some real payload
	  data and they belong to the same burst. i.e. $Intervals = [T_3-T_2,
	  T_5-T_3, T_9-T_8, T_{12}-T_{11}, T_{14}-T_{12}, T_{15}-T_{14}]$
	  and $\rho^*=2^{\lfloor\log_2 Median(Intervals[])\rfloor}$.}
  \label{fig:rate-adaptation}
\end{figure*}

\begin{algorithm}[t]
  \caption{Algorithm for estimating new value of $\rho^*$ based on past network performance.}
  \label{alg:rho-estimator}
  \begin{algorithmic}
    \Function{rho-estimator}{$\rho$\emph{-stats}, $\rho^*$} 
    \State $I \gets \left[\rho\emph{-stats}_{i+1}-\rho\emph{-stats}_{i}\mid \rho\emph{-stats}_{i}\not=\perp\wedge \rho\emph{-stats}_{i+1}\not=\perp\right]$
    \If{$I$ is empty list}
      \State\Return $\rho^*$
    \Else
      \State\Return$2^{\lfloor\log_2\Call{median}{I}\rfloor}$
    \EndIf
    \EndFunction
  \end{algorithmic}
\end{algorithm}

\csb adapts its transmission rate to match the rate of the sender.
This reduces wasted bandwidth when proxying slow senders, and it
reduces latency when proxying fast senders.  However, adapting
\csb's transmission rate to match the sender's reveals information
about the sender, and therefore may harm security.

As shown in Figure~\ref{fig:rate-adaptation}, \csb takes several steps to limit
the information that is leaked through rate adaptation.  First, it only adapts
after transmitting $2^k$ bytes, for some integer $k$.  Thus, during a session
in which \csb transmits $n$ bytes, \csb will perform $\log_2 n$ rate
adjustments, limiting the information leaked from these adjustments.  This
choice also allows \csb to adapt more quickly during the beginning of a
session, when the sender is likely to be performing a TCP slow start.  During
this phase, \csb is able to ramp up its transmission rate just as quickly as
the sender can.

\csb further limits
information leakage by using a robust statistic to update $\rho^*$.
Between adjustments, it collects estimates of the sender's
instantaneous bandwidth. It then sets $\rho^*$ so as to match the
sender's median instantaneous bandwidth.  Median is a robust
statistic, meaning that the new $\rho^*$ value will not be strongly
influenced by bandwidth bursts and lulls, and hence $\rho^*$ will not
reveal much about the sender's transmission pattern.

Note that the estimator only collects measurements during
uninterrupted bursts from the sender.  This ensures that the bandwidth
measurements do not include delays caused by dependencies between
requests and responses.  

For example, if the estimator sees a packet
$p_1$ from the website, then a packet $p_2$ from the client, and then
another packet $p_3$ from the website, it may be the case that $p_3$ is
a response to $p_2$.  In this case, the time between $p_1$ and $p_3$
is constrained by the round trip time, not the website's bandwidth.

Finally, \csb rounds all $\rho^*$ values up to a power of two.
This further hides information about the sender's true rate, and gives
the sender room to increase it's transmission rate, e.g. during slow
start.

\subsection{Congestion-Sensitivity}
\label{ssec:congestion-sensitivity}

There's a trivial way to make \buflo congestion sensitive and TCP
friendly: run the protocol over TCP. With this approach, we grab an 
additional opportunity for increasing efficiency: when the network is
congested, \csb does not need to insert junk data to fill the output
buffer.

\begin{algorithm}[t]
  \caption{Algorithm for sending data and using feedback from TCP.
    Socket $s$ should be configured with \texttt{O\_NONBLOCK}.}
  \label{alg:cssend}
  \begin{algorithmic}
    \Function{cs-send}{$s$, \emph{output-buff}}
      \State $n \gets$\Call{length}{\emph{output-buff}}
      \State $j\gets 0$
      \If{$n<\packetsize$}
        \State $j \gets \packetsize-n$
        \State  \emph{output-buff} $\gets$  \emph{output-buff} $\|$ \emph{j}
      \EndIf
      \State $r \gets$  \texttt{write}($s$, \emph{output-buff}, \packetsize)
      \If{$r\geq n$} \Comment{Optional: reclaim unsent junk}
        \State $\emph{output-buff} \gets $ empty buffer
        \State $j \gets r - n$
      \Else
        \State remove last $j$ bytes from $\emph{output-buff}$
        \State remove first $r$ bytes from $\emph{output-buff}$
        \State $j \gets 0$
      \EndIf
      \State \Return $(\emph{output-buff},j)$
    \EndFunction
  \end{algorithmic}
\end{algorithm}

\Cref{alg:cssend} shows our method for taking advantage of congestion
to reduce the amount of junk data sent by \csb.  Note first that
\textsc{cs-send} always writes exactly $d$ bytes to the TCP socket.
Since the amount of data presented to the TCP socket is always the
same, this algorithm reveals no information about the timing or size
of application-data packets from the website that have arrived at the
\csb proxy.

This algorithm takes advantage of congestion to reduce the amount of
junk data it sends.  To see why, imagine the TCP connection to the
client stalls for an extended period of time.  Eventually, the
kernel's TCP send queue for socket $s$ will fill up, and the call to
\texttt{write} will return $0$.  From then until the TCP congestion
clears up, \csb calls to \textsc{cs-send} will not append any further
junk data to $B$.  

\subsection{Stream Padding}
\label{ssec:stream-padding}

\csb hides the total size of real data transmitted by continuing to
transmit extra junk data after the browser and web server have stopped
transmitting.

\begin{table}[t]
  \small
  \centering
  \bgroup
  \def\arraystretch{1.5}
  \begin{tabular}{|p{1.1cm}|p{2cm}|p{2cm}|p{2.2cm}|}
    \hline
    Padding Schemes      		&	Payload Sent \newline Before Padding	& Junk Sent	\newline Before Padding		&	Total Bytes Sent\newline After Padding \\
    \hline
    \hline
    \textit{payload} padding	&	$R$				&	$J$			&	$c2^{\lceil\log_2 R\rceil}$		\\
	\hline    
    \textit{total} padding		&	$R$				&	$J$			& 	$2^{\lceil\log_2 (R+J)\rceil}$		\\
    \hline
  \end{tabular}
  \egroup
  \caption{Two different padding schemes for \csb.}
  \label{tab:padding-schemes}
\end{table}

\Cref{tab:padding-schemes} shows two related padding schemes we experimented with in \csb.  Both
schemes introduce at most a constant factor of additional cost,
but reveal at most a logarithmic amount of information about the size
of the website.  The first scheme, which we call \textit{payload}
padding, continues transmitting until the total amount of transmitted
data ($R+J$) is a multiple of $2^{\lceil\log_2 R\rceil}$.  This
padding scheme will transmit at most $2^{\lceil\log_2 R\rceil}$
additional bytes, so it increases the cost by at most a factor of
$2$, but it reveals only $\log_2 R$.

The second scheme, which we call \textit{total} padding, continues
transmitting until $R+J$ is a power of $2$.  This also increases the
cost by at most a factor of $2$ and reveals, in the worst case,
$\log_2 R$, but it will in practice hide more information about $R$
than payload padding.

Note that the \csb server and the \csb client do not have to use the
same stream padding scheme.  Thus, there are four possible padding
configurations, which we denote CPSP (client payload, server payload),
CPST (client payload, server total), CTSP (client total, server
payload) and CTST (client total, server total).

In order to determine when to stop padding, the \csb server must know
when the website has finished transmitting.  \csbuflo uses two
mechanisms to recognize that the page has finished loading.  First,
the \csb client proxy monitors for the browser's onLoad event.  The
\csb client notifies the \csb server when it receives the onLoad event
from the browser.  Once the \csb server receives the onLoad message
from the client, it considers the web server to be idle (see
\Cref{alg:csbuflo-helpers}) and will stop transmitting as soon as it
adds sufficient stream padding and empties its transmit buffer.  As a
backup mechanism, the \csb server considers the website idle if
\quiettime seconds pass without receiving new data from the website.
We used a \quiettime of 2 seconds in our prototype implementation.


\begin{algorithm}[t]
  \caption{Definition of the \textsc{done-xmitting} function.}
  \label{alg:csbuflo-helpers}
  \begin{algorithmic}
    \Function{done-xmitting}{}
      \State \Return $\Call{length}{\emph{output-buff}} \gets 0$  $\wedge\Call{channel-idle}{\emph{onLoadEvent,last-site-response-time}} \wedge $ $(\emph{padding-done} $ $\vee 
      $ $\Call{crossed-threshold}{\emph{real-bytes + junk-bytes}})$
    \EndFunction
    \\
    \Function{channel-idle}{$\emph{onLoadEvent}$, $\emph{last-site-response-time}$}
      \State\Return $\emph{onLoadEvent} \vee (\emph{last-site-response-time} + \quiettime<\Call{current-time}{})$
    \EndFunction
    \\
    \Function{crossed-threshold}{$x$}
      \State\Return $\lfloor\log_2 (x-\packetsize)\rfloor<\lfloor\log_2 x\rfloor$
    \EndFunction

  \end{algorithmic}
\end{algorithm}

\subsection{Early Termination}
\label{ssec:early-termination}

As described above, the \csb server is likely to finish each page load
by sending a relatively long tail of pure junk packets.  This tail can
be a significant source of overhead and, somewhat surprisingly, may
not provide much additional security.  

Our initial investigations revealed that the long tail served two
purposes which could also be served through other, more efficient
means.  As mentioned above, the long tail helps hide the total size of
the website.  However, the interior padding performed by
\textsc{cs-send} also obscures the total size of the website.  Our
evaluation in \Cref{sec:evaluation} investigates the security impact of
additional stream padding.

In the specific context of web browsing, the long tail also hides the
size of the last object sent from the web server to the client.  The
attacker can infer some information about the size of this object by
measuring the amount of data the \csb server sends to the \csb client
after the \csb client stops transmitting to the \csb server.  However,
this information can also be hidden by having the \csb client continue
to send junk packets to the \csb server, i.e. more aggressive stream
padding from the \csb client may obviate the need for aggressive
padding at the \csb server.  

Based on these ideas, we implemented an \textit{early termination}
feature in our \csb prototype.  The \csb client notifies the \csb
server that it is done padding.  After receiving this message, the
\csb server will stop transmitting as soon as the web server becomes
idle and its buffers are empty.  

\begin{figure}[t]
  \scalebox{.525}{\includegraphics{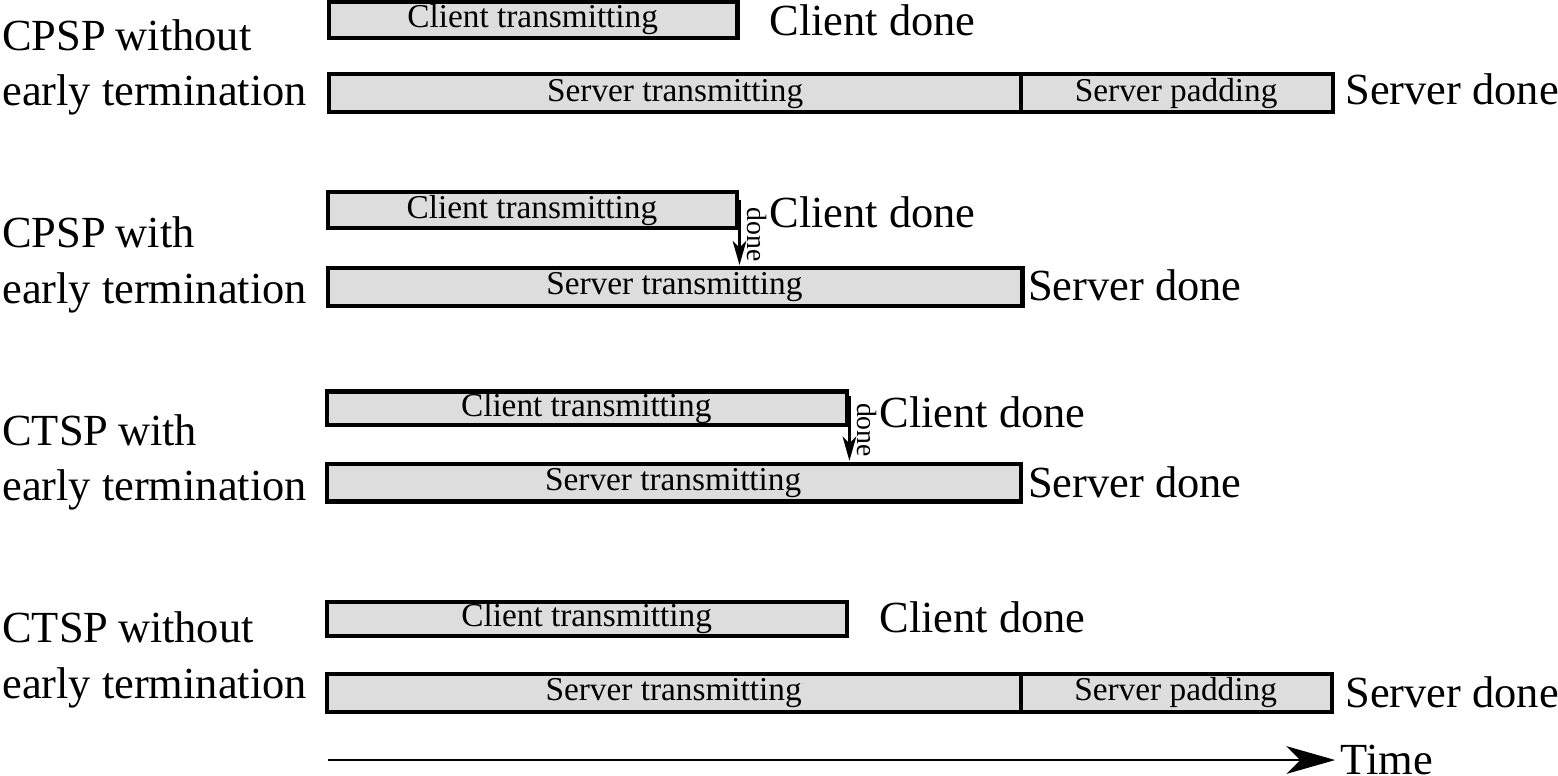}}
  \caption{The interaction between client and server padding schemes
    and early termination.  More padding at the client can help hide
    the size of the last object sent from the server to the client.
    Early termination can avoid unnecessary padding at the end of a
    page load.}
  \label{fig:padding-schemes}
\end{figure}

\Cref{fig:padding-schemes} illustrates how the padding scheme used by
the client and server can interact, including the impact of early
termination.  Additional client padding can hide the size of the last
HTTP object, and early termination can avoid unnecessary padding.  Our
evaluation investigates the security/efficiency trade-offs between
different padding regimes at the client and server, and how they
interact with early termination.

\subsection{Packet Sizes}
\label{subsubsec:hiding-packet-sizes}
Sending fixed-length packets hides packet size information from the
attacker.  Although any fixed length should work, it is important to
choose a packet length that maximizes performance. Since we may
transmit pure junk packets during the transmission, larger packets
tend to cause higher bandwidth overhead, and on the other hand,
smaller packets may not make full use of the link between the client
and server, thus increase the loading time. 

Preliminary investigations revealed that over 95.7\% of all upstream
packet transmissions are under 600 bytes, therefore, this was used
as the standard packet size in our experiments.

\section{Prototype Implementation}
\label{subsec:implementation}

We modified OpenSSH5.9p1 to implement \Cref{alg:csbuflo-server}.
However, the optional junk 
recovery algorithm described in \Cref{alg:cssend} was not implemented.

The SSH client was also modified to accept a new SOCKS proxy command 
code, \textit{onLoadCmd}. This command was used to communicate to the 
server when to stop padding (as described in \Cref{ssec:stream-padding}).
A Firefox plugin, \textit{OnloadNotify}, that, upon detecting the page 
\textit{onLoad} event, connects to the SSH client's SOCKS port and issues 
the \textit{onLoadCmd}, was also developed. 

In addition, the following OpenSSH message types were used:
\begin{enumerate}
\item The OpenSSH message type \texttt{SSH\_MSG\_IGNORE}, which means 
all payload in a packet of this type can be ignored, was used 
to insert junk data whenever needed.    
\item The \texttt{SSH\_MSG\_NOTIFY\_ONLOAD} message was created to be
used by the client to communicate reception of \textit{onLoadCmd} from 
the browser, to the server. Upon receiving this message from the client,
the \csb server stops transmitting as soon as it empties its buffer and
adds sufficient stream padding.
\item The \texttt{SSH\_MSG\_NOTIFY\_PADDINGDONE} message was created to 
implement the early termination feature of \csb. Upon receiving this message
from the client, the \csb server stops transmitting as soon as the web 
server becomes idle and its buffers are empty.
\end{enumerate}

All the above messages were buffered and transmitted just like other 
messages in \Cref{alg:csbuflo-server}, i.e. using \textsc{cs-send}, 
therefore an attacker is unable distinguish these messages from 
other traffic.

\section{Evaluation}
\label{sec:evaluation}

We investigated several questions during our evaluation:
\begin{itemize}
  \item How do the different stream padding schemes affect performance
    and security of \csb?  What is the effect of adding early
    termination to the protocol?
  \item How does \csb's security and overhead compare to Tor's, and
    how do they both compare to the theoretical minimums derived in
    \Cref{sec:theory}?  
  \item Can we use the theoretical lower bounds to enable us to
    compare defenses that have different security/overhead trade-offs?
\end{itemize}

\subsection{Experimental Setup}
\label{subsec:setup}

For our main experiments, we collected traffic from the Alexa top 200
functioning, non-redirecting web pages using four different defenses:
plain SSH, Tor, \csb with the CTSP padding and early termination, and
\csb with CPSP padding and early termination.  We also collected
several smaller data sets using other configurations of \csb, but
these are only used in the padding scheme evaluations
(\Cref{tab:padding-scheme-results}).

We constructed a list of the Alexa top 200 functioning,
non-redirecting, unique pages, as follows.  We removed web pages that
failed to load in Firefox (without Tor or any other proxy).  We
replaced URLs that redirected the browser to another URL with their
redirect target.  Some websites display different languages and
contents depending on where the page is loaded,
e.g. \textit{www.google.com} and \textit{www.google.de}. We kept only
one URL for this type of website, i.e. we only had
\textit{www.google.com} in our set.  Our data set consisted of Alexa's
200 highest-ranked pages that met these criteria.

We collected 20 traces of each URL, clearing the browser cache
between each page load. We collected traces from each web page in a
round-robin fashion.  As a result, each load of the same URL occurred
about 5 hours apart.

Measuring the precise latency of a fingerprinting defense scheme poses a
challenge: we can easily measure the time it takes to load a page
using the defense, but we cannot infer the exact time it \emph{would have
taken} to load the page without the defense.  Therefore, every time we 
loaded a page using a defense, we immediately loaded it again using SSH 
to get an estimate of the time it would have taken to load the page without
the defense in place.  We then compute latency ratios the same way
we compute bandwidth ratios, i.e. if $L(t)$ is the total duration
of a packet trace, the latency ratio of a defense scheme is
\[
\frac{E\left[L(T^D_W)\right]}{E\left[L(T_W)\right]}
\]

We collected network traffic using several different computers with
slightly different versions of Ubuntu Linux -- ranging from 9.10 to
11.10.  We used Firefox 3.6.23-3.6.24 and Tor 0.2.1.30 with polipo
HTTP Proxy. All Firefox plugins were disabled during data collection,
except when collecting \csb traffic, where we enabled the
\textit{OnloadNotify} plugin.  Three of the computers had 2.8GHz Intel
Pentium CPUs and 2GB of RAM, one computer had a 2.4GHz Intel Core 2
Duo CPU with 2GB of RAM.  We scripted Firefox using Ruby and captured
packets using tshark, the command-line version of wireshark.  For the
SSH experiments, we used OpenSSH5.3p1. Our Tor clients used the
default configuration. SSH tunnels passed between two machines on the
same local network.

We measured the security of each defense by using the three best
traffic analysis attacks in the literature: VNG++ \cite{dyer-snp12},
the Panchenko SVM \cite{panchenko-wpes11}, and DLSVM
\cite{cai-ccs12}. We ran each of the above classifiers against 
the traces generated by each defense using stratified 10-fold cross 
validation.

\subsection{Results}
\label{subsec:results}



\paragraph*{Padding Schemes} \Cref{tab:padding-scheme-results} shows 
the bandwidth ratio, latency ratio, and security (estimated
using the VNG++ attack) of four different versions of \csb on a data
set of 50 websites.  Note that early termination does not appear to
affect security, although it can significantly reduce overhead in some
configurations.  All other experiments in this paper use early
termination.  The client padding scheme, on the other hand, appears to
control a trade-off between security and overhead.  Therefore we
report the rest of our results for both CPSP and CTSP padding.

\begin{table}[t]
  \small
  \centering
  \bgroup
  \def\arraystretch{1.5}
  \begin{tabular}{|p{1.1cm}p{1.6cm}p{1.5cm}p{1.05cm}p{1.25cm}|}
    \hline
    Padding       &	Early\newline Termination	&	Bandwidth Ratio & Latency Ratio & VNG++\newline Accuracy	\\
    \hline
    \hline
    CTSP	&	Yes               &	3.59               & 3.91    & 29.0\%               \\
    CTSP	&	No                &	3.73               & 3.51    & 29.6\%             \\
    CPSP	&	Yes               &	2.60               & 2.87    & 34.2\%             \\
    CPSP	&	No                &	3.42               & 3.52    & 36.0\%               \\
    \hline
  \end{tabular}
  \egroup
  \caption{Security and performance of \csbuflo variants. VNG++ success rate is the
    probability that the attack was able to correctly guess which of
    50 web pages the user was visiting.}
  \label{tab:padding-scheme-results}
\end{table}

\paragraph*{Security Comparison}
Figure~\ref{fig:all-attacks-vs-sites-various} shows the level of
security various defense schemes provide against three different
attacks, as the number of web pages the attacker needs to distinguish
increases. Note that the \csb schemes have significantly better
security than Tor and SSH.  For each defense scheme, we compute its
average bandwidth ratio, $BO$, and plot the lower bound on security
that can be achieved within that ratio, using the algorithm from
\Cref{sec:theory}.

\begin{figure*}[t]
  \centering
  \subfigure[\csb (CTSP)]{
    \includegraphics[clip, trim = 1.1in 6.9in 3.5in 1.75in, width=3.35in]{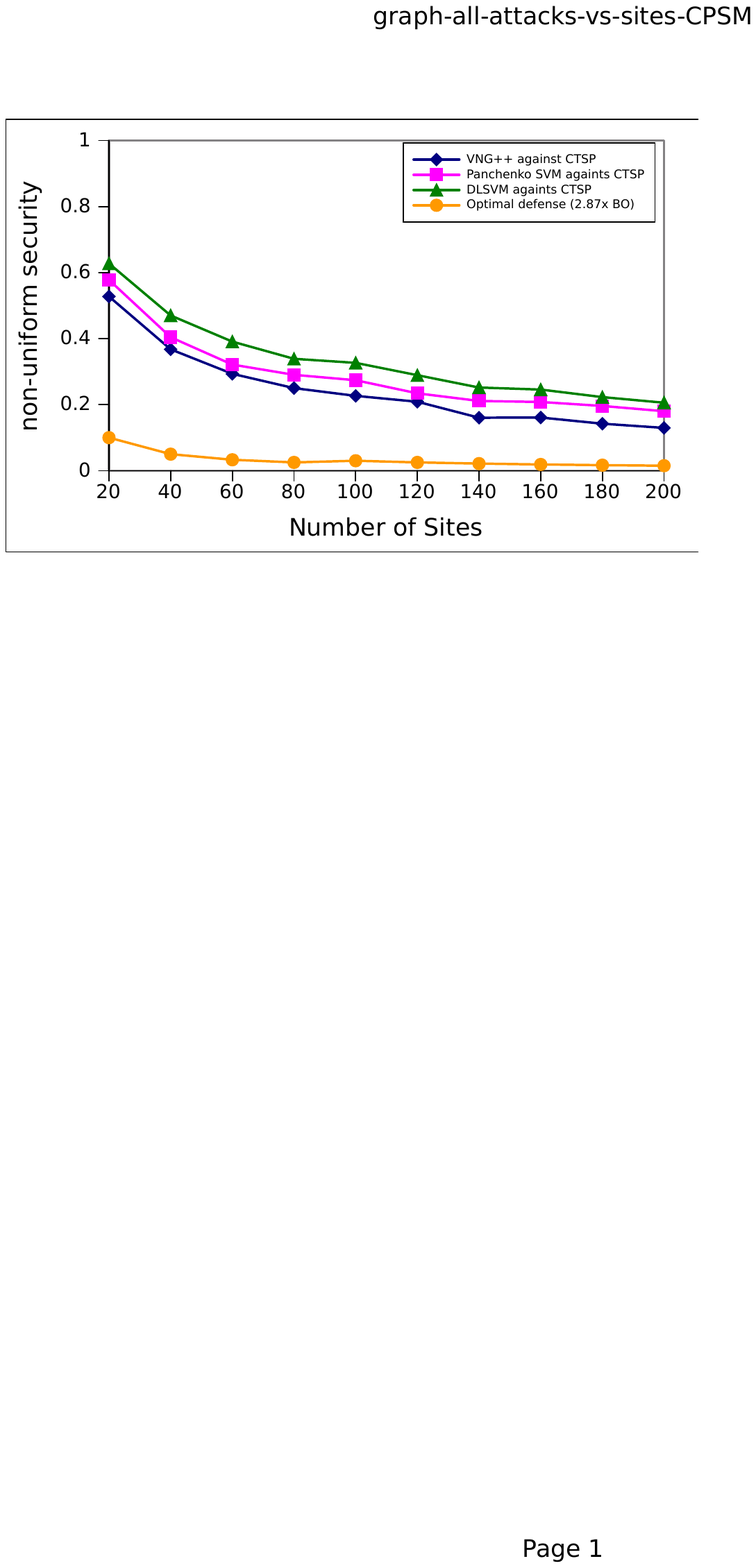}
    \label{fig:all-attacks-vs-sites-CTSP}
  }
  \subfigure[\csb (CPSP)]{
    \includegraphics[clip, trim = 1.1in 6.9in 3.5in 1.75in, width=3.35in]{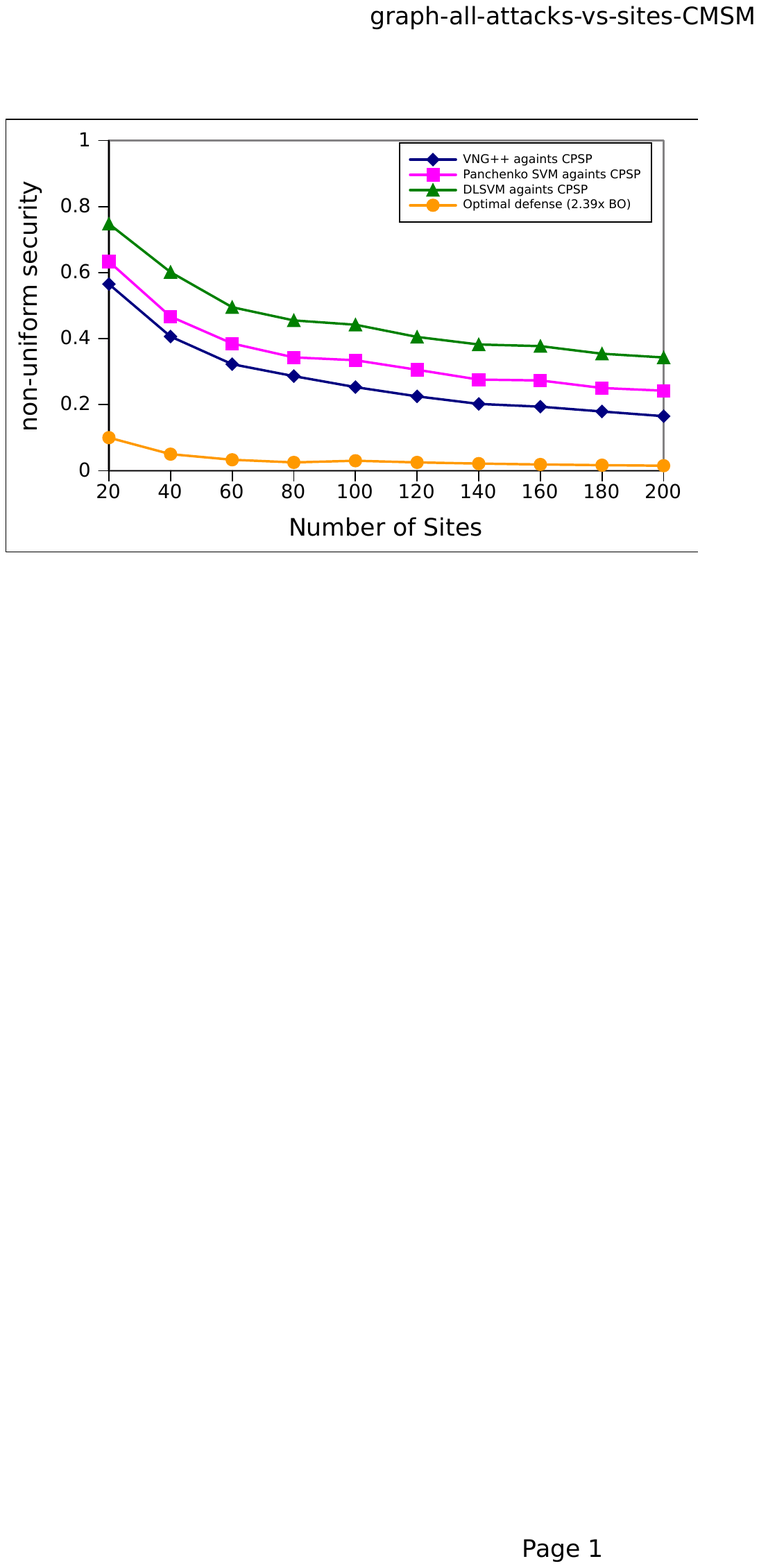}
    \label{fig:all-attacks-vs-sites-CPSP}
  }
  \subfigure[Tor]{
    \includegraphics[clip, trim = 1.1in 6.9in 3.5in 1.75in, width=3.35in]{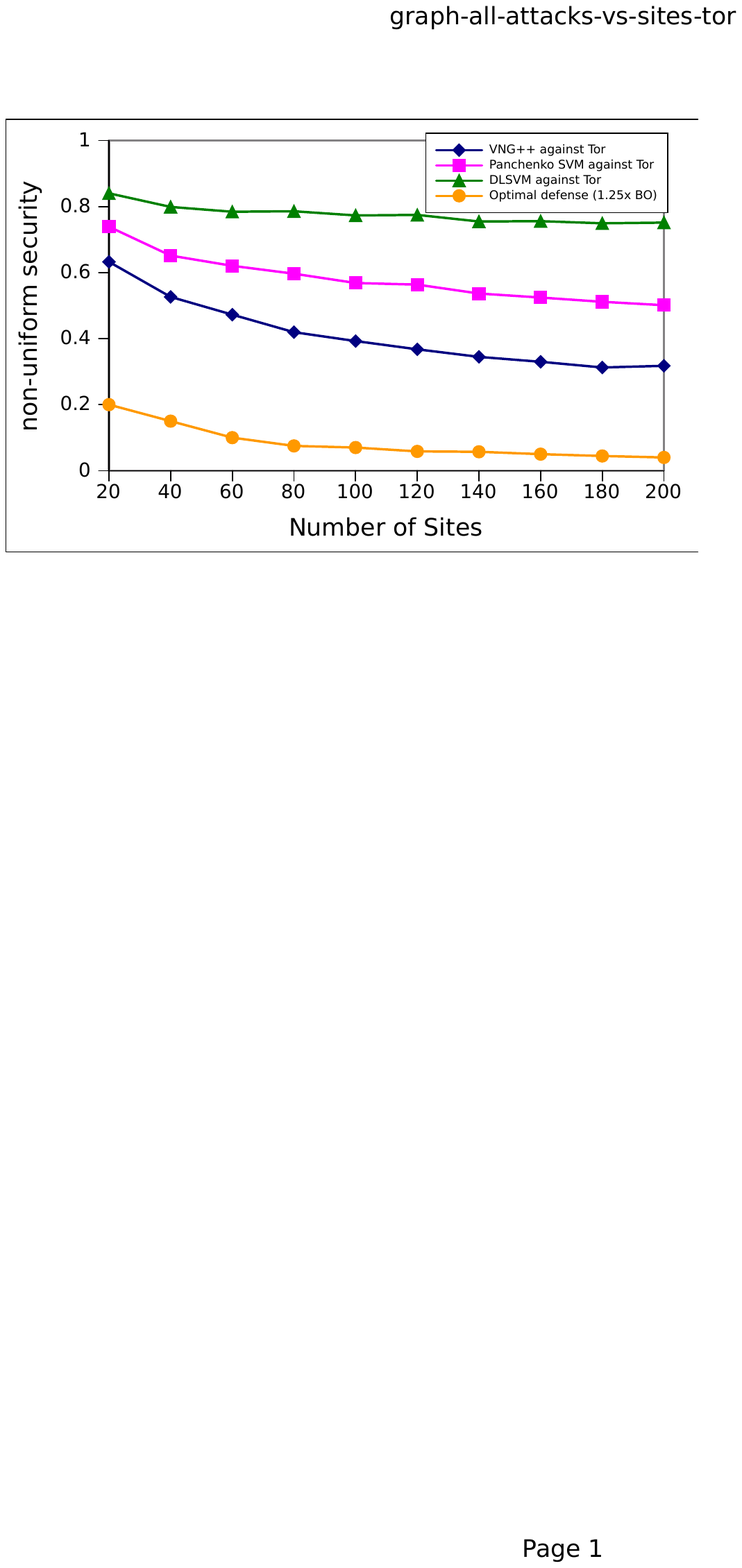}
    \label{fig:all-attacks-vs-sites-tor}
  }
  \subfigure[SSH]{
    \includegraphics[clip, trim = 1.1in 6.9in 3.5in 1.75in, width=3.35in]{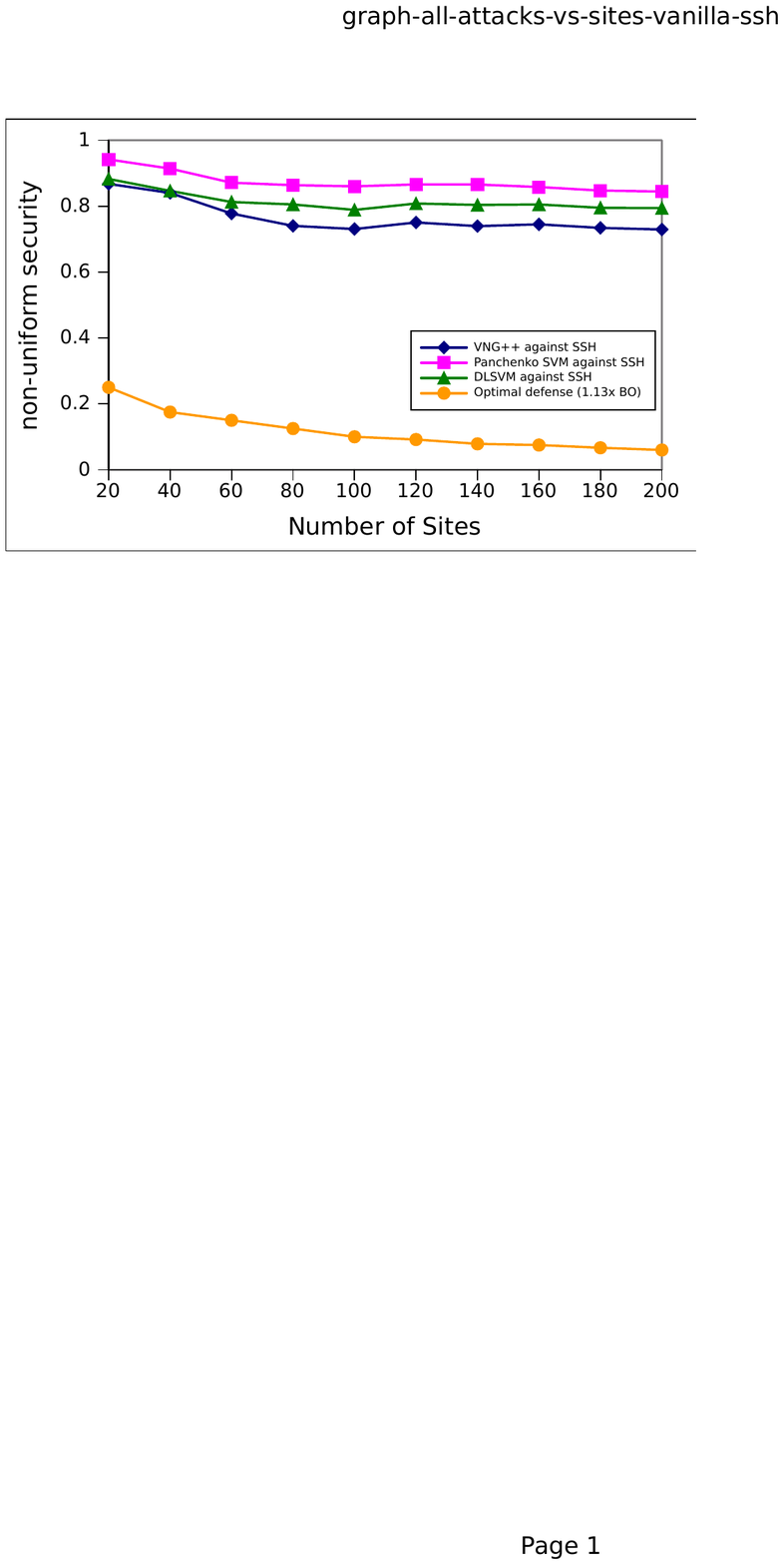}
    \label{fig:all-attacks-vs-sites-vanilla-ssh}
  }
  \caption{\label{fig:all-attacks-vs-sites-various}
  	Security of \csb, Tor, and SSH compared to the lower bounds from \Cref{sec:theory},
	  as a function of the number of possible web pages.
    }
\end{figure*}

\paragraph*{Bandwidth Cost} \Cref{fig:overhead} plots the bandwidth ratios
of SSH, Tor, and \csb with CTSP and CPSP padding.  SSH has almost no
overhead, and Tor's overhead is about 25\% on average.  \csb with CPSP
has an average overhead of 129\%, CTSP has average overhead 180\%.  Thus
\csb's improved security does come at a price.

\begin{figure}[t]
  \centering
  \includegraphics[clip, trim = 1.1in 6.9in 3.5in 1.75in, width=3.35in]{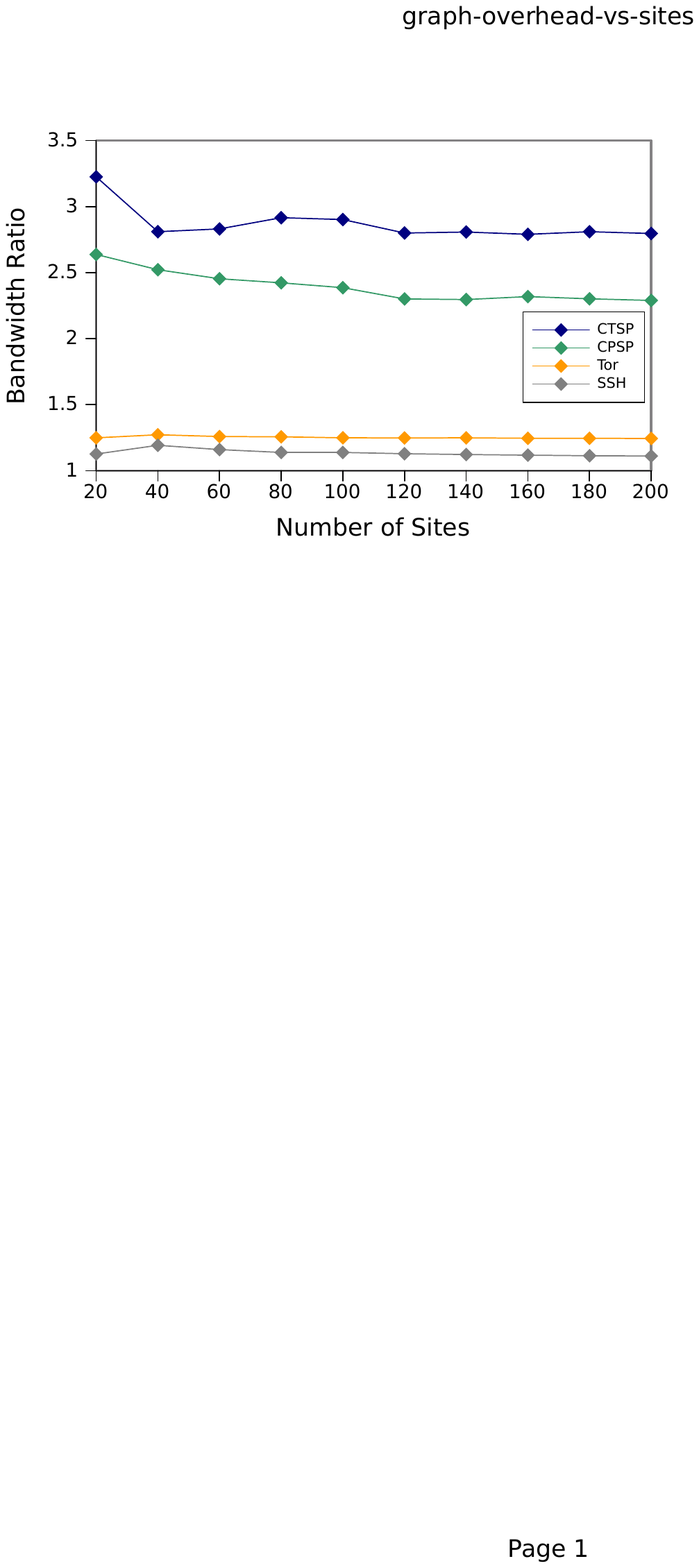}
  \caption{\label{fig:overhead} Bandwidth ratios of various defense schemes as a function of the number of possible web pages.}
\end{figure}

\paragraph*{Theoretical Bounds}
\Cref{fig:all-attacks-tradeoff-various} evaluates \csb, Tor, SSH, and
\buflo against the theoretical lower bounds developed in
\Cref{sec:theory}.

\Cref{fig:bounds_dlsvm} shows the results of our empirical evaluation
of \csb, Tor, and SSH on $n=120$ sites and using the DLSVM attack to
estimate security.  We limit to 120 sites to make it easier to compare
with the \buflo results reported by Dyer, et al., and which use 128
sites.  There is a significant gap between the bandwidth of \csb and
the lower bound.  However, as can be seen in
\Cref{fig:closeness_dlsvm}, \csb in CTSP mode is over 6$\times$ closer
to the trade-off lower bound than Tor for 200 sites, and is the most
efficient scheme across all sizes we measured.

\Cref{fig:bounds_panchenko} presents the results of our empirical
evaluation of \csb, Tor, and SSH on $n=120$ websites, busing the
Panchenko attack to estimate security.  We also present Dyer's
reported results from their experiments with \buflo on 128 sites, also
using the Panchenko attack.  Note that, since Dyer used 128 sites to
evaluate \buflo, this slightly over-estimates \buflo's security
compared to the other schemes plotted in the figure.  Also, recall
that Dyer's experiments with \buflo were all based on simulation.  

Despite the differences in experimental methodology, we can see that
\csb offers performance in the same general range as the \buflo
configurations from Dyer's paper, but has slightly worse security in
our experiments.  

\Cref{fig:closeness_panchenko} shows that, based on our experiments
and the simulation results of Dyer, et al., all but one \buflo
configuration get closer to the trade-off lower bound curve than \csb,
Tor, and SSH (SSH is omitted from the graph because its ratio to the
lower bound was never less than 400).  This figure also highlights a
difference between the DLSVM and Panchenko attacks.  In the DLSVM
results shown in \Cref{fig:closeness_dlsvm}, Tor and SSH diverge from
\csb.  In the Panchenko results in \Cref{fig:closeness_panchenko}, Tor
and \csb appear to be equally close to the lower bound.

\begin{figure*}[t]
  \centering
  
  \subfigure[DLSVM, $n=120$]{
    \includegraphics[clip, trim = 1.1in 6.9in 3.5in 1.75in, width=3.35in]{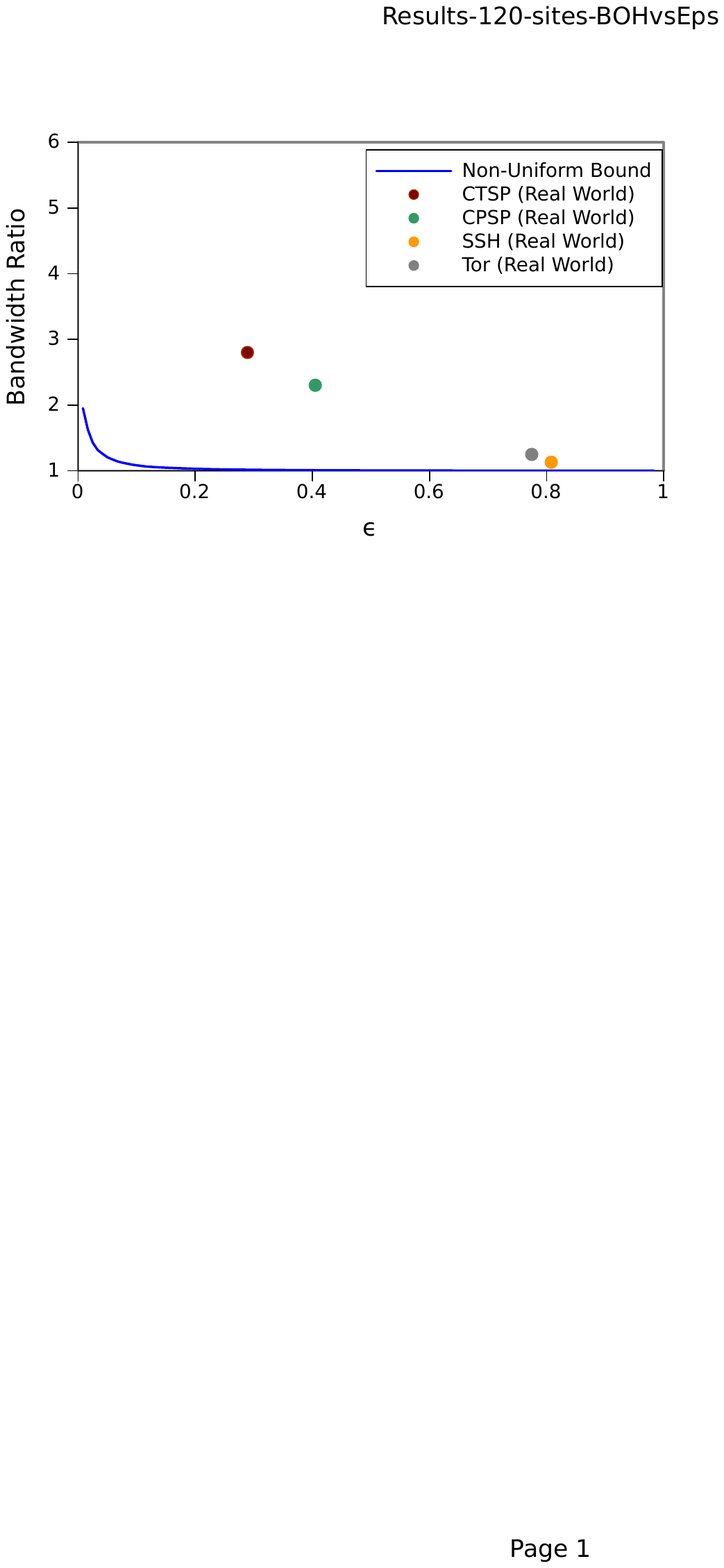}
    \label{fig:bounds_dlsvm}
  }
  \subfigure[Panchenko, $n=120$]{
    \includegraphics[clip, trim = 1.1in 6.9in 3.5in 1.75in, width=3.35in]{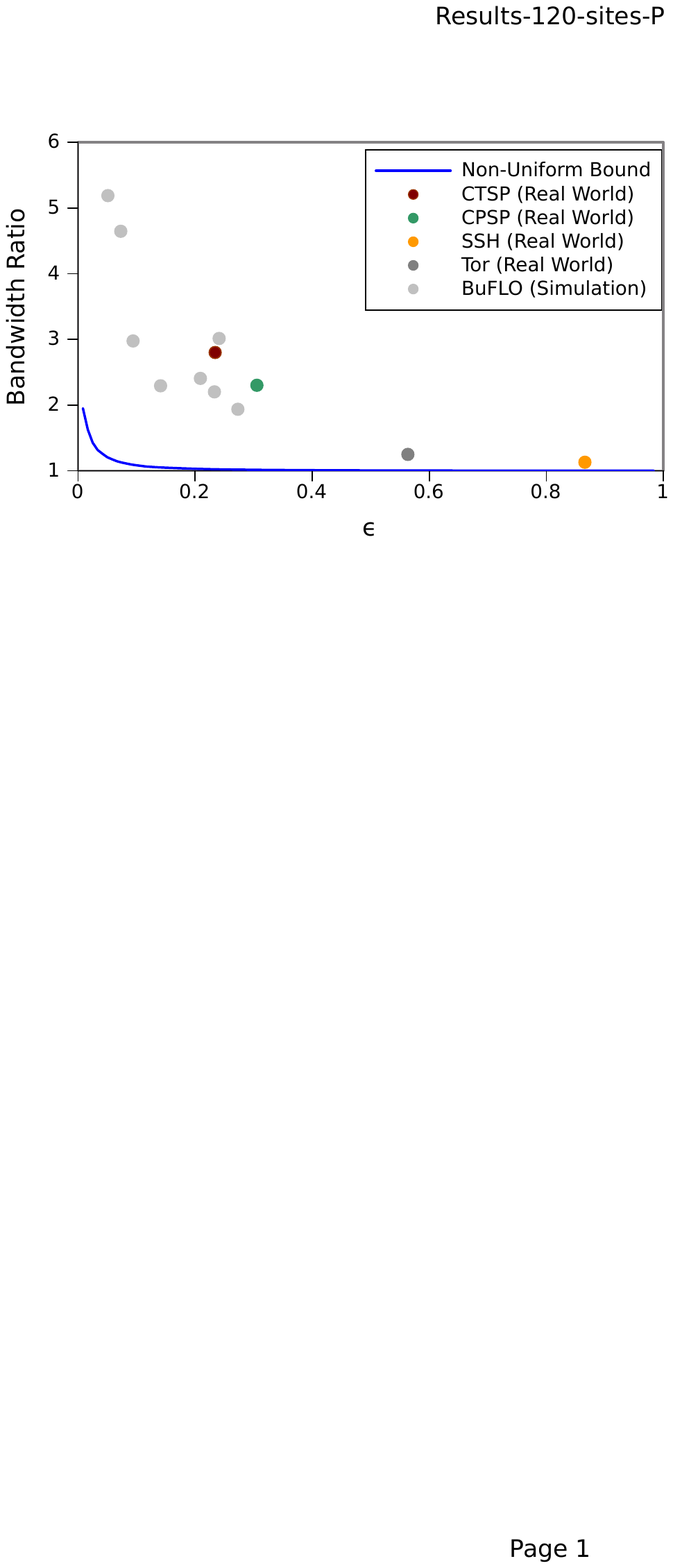}
    \label{fig:bounds_panchenko}
  }
  
  \subfigure[DLSVM]{
  \includegraphics[clip, trim = 1.1in 6.9in 3.5in 1.7in, width=3.35in]{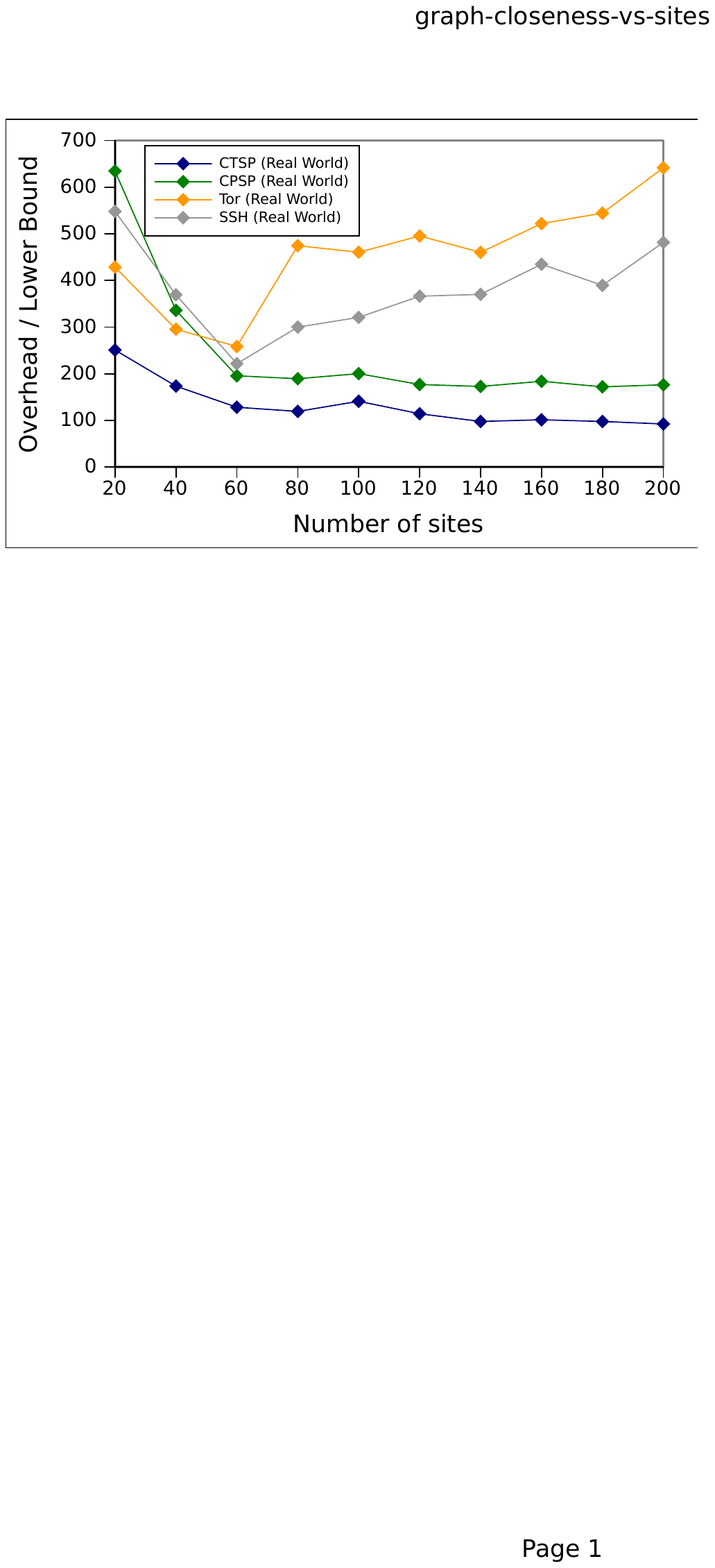}
    \label{fig:closeness_dlsvm}
  }
  \subfigure[Panchenko]{
    \includegraphics[clip, trim = 1.1in 6.9in 3.5in 1.7in, width=3.35in]{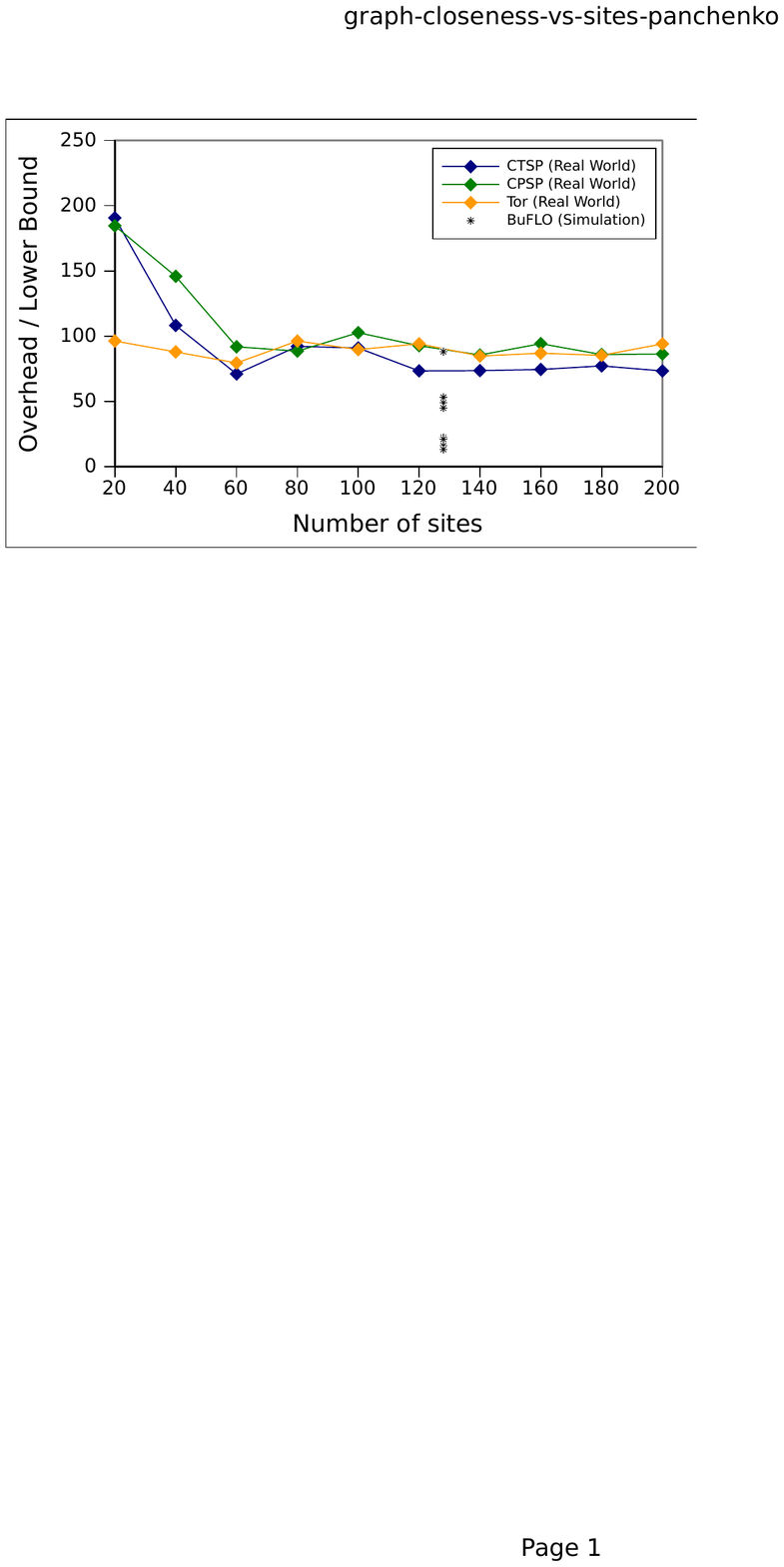}
    \label{fig:closeness_panchenko}
  }

  \caption{\label{fig:all-attacks-tradeoff-various} Non-uniform lower
    bounds on bandwidth ratio, as a function of the security
    parameter, $\epsilon$, and specific trade-off points of the
    systems evaluated.  The \buflo results are taken from Dyer, et
    al.~\cite{dyer-snp12}, and therefore use $n=128$.  SSH is omitted
    from \Cref{fig:closeness_panchenko} because its ratio to the lower
    bound was always greater than 400.}
\end{figure*}



\section{Discussion}
\label{sec:discussion}

Since early termination does not seem to affect security, the padding results
suggest that the padding performed while transmitting a website
sufficiently hides the size of the website, so that additional stream
padding at the end of the transmission has little security benefit.
Additional client padding does improve security, though -- probably by
obscuring the size of the final object requested by the client.

The lower bounds derived in \Cref{sec:theory} proved useful for
comparing schemes.  For example, without the lower bounds, it is
difficult to determine whether Tor, SSH, or \csb has the greatest
efficiency in \Cref{fig:bounds_dlsvm}, but it becomes obvious in
\Cref{fig:closeness_dlsvm}.  


Overall, \csb has better security than any other defense in our
experiments, albeit at greater expense.  It has the best
security/overhead trade-off, as well.

\csb's security/overhead trade-off is in the same range as the
estimates Dyer obtained for \buflo in their simulations.  For example,
Dyer, et al., reported that, in one configuration of \buflo, bandwidth
overhead was 200\% and the Panchenko SVM had an 24.1\% success rate on
128 websites.  We found that \csb with CTSP padding had an overhead of
180\% on 120 websites, and that the Panchenko SVM had a success rate
of 23.4\%.

\csb's congestion-sensitivity likely had little impact in these
experiments, which were carried out on a fast local network, so that
congestion was rare.  However, \csb's congestion-sensitivity means
that, in a real deployment, it would have even better bandwidth
overhead.

\csb's latency overhead is approximately 3 in all our experiments.
This is better than Tor's latency, although Tor has the additional
overhead of onion routing, so no fair comparison is possible.  We
cannot compare with the latency estimates reported by Dyer, et al.,
because they gave only absolute latency values.

\section{Conclusion}
\label{sec:conclusion}

\csbuflo offers a high-security, moderate-overhead solution to website
fingerprinting attacks.  Compared to SSH and Tor, it achieves a better
security/bandwidth trade-off, i.e. it uses its bandwidth efficiently
to provide extra security.  Our experiments also show that it has
acceptable latencies.  The padding schemes developed in this paper,
along with browser-coordination and early-termination algorithm, can
improve security with less overhead than previous stream padding
schemes.  Interestingly, we also found that padding from one end of a
connection can sometimes be an efficient way to hide information about
the data sent from the other side of the connection.


Our theoretical results provide new tools for comparing defense
systems.  More importantly, they suggest that a small amount of
well-placed cover traffic can make many websites look similar.
Therefore, the reason website fingerprinting defenses are so expensive
is not because websites are so different.  Rather, it is because the
defense, operating blindly, does not know where to put the cover
traffic, and so it must put it everywhere.  An interesting direction
for future research is to attempt to approximate the knowledge of an
offline defense by having a real defense remember information about
websites seen in the past.


\bibliographystyle{plain}

\bibliography{rob}

\begin{thebibliography}{10}

\bibitem{alexa-top-sites}
Amazon web services - alexa top sites.
\newblock \url{https://aws.amazon.com/alexatopsites/}, October 2013.

\bibitem{bissias-pets06}
George Bissias, Marc Liberatore, David Jensen, and Brian Levine.
\newblock Privacy vulnerabilities in encrypted http streams.
\newblock In {\em PETS}, 2006.

\bibitem{cai-ccs12}
Xiang Cai, Xincheng Zhang, Brijesh Joshi, and Rob Johnson.
\newblock Touching from a distance: website fingerprinting attacks and
  defenses.
\newblock In {\em ACM CCS}, 2012.

\bibitem{danezis-trafficanalysis}
George Danezis.
\newblock Traffic analysis of the {HTTP} protocol over {TLS}.
\newblock
  \url{http://research.microsoft.com/en-us/um/people/gdane/papers/TLSanon.pdf}.

\bibitem{dyer-snp12}
Kevin~P. Dyer, Scott~E. Coull, Thomas Ristenpart, and Thomas Shrimpton.
\newblock Peek-a-boo, i still see you: Why efficient traffic analysis
  countermeasures fail.
\newblock In {\em IEEE Security and Privacy}, 2012.

\bibitem{fu-iaw03}
X.~Fu, B.~Graham, R.~Bettati, and W.~Zhao.
\newblock On countermeasures to traffic analysis attacks.
\newblock In {\em Information Assurance Workshop}, 2003.

\bibitem{SCS_SR8}
Michael~R Garey and David~S Johnson.
\newblock {\em Computers and intractability}, volume 174.
\newblock freeman New York, 1979.

\bibitem{gong-ccs10}
Xun Gong, Negar Kiyavash, and Nikita Borisov.
\newblock Fingerprinting websites using remote traffic analysis.
\newblock In {\em ACM CCS}, 2010.

\bibitem{herrmann-ccsw09}
Dominik Herrmann, Rolf Wendolsky, and Hannes Federrath.
\newblock Website fingerprinting: attacking popular privacy enhancing
  technologies with the multinomial naive-bayes classifier.
\newblock In {\em ACM Workshop on Cloud Computing Security}, 2009.

\bibitem{hintz-pets02}
Andrew Hintz.
\newblock Fingerprinting websites using traffic analysis.
\newblock In {\em PETS}, 2002.

\bibitem{liberatore-ccs06}
Marc Liberatore and Brian~Neil Levine.
\newblock Inferring the source of encrypted http connections.
\newblock In {\em ACM CCS}, 2006.

\bibitem{lu-esorics10}
Liming Lu, Ee-Chien Chang, and Mun Chan.
\newblock Website fingerprinting and identification using ordered feature
  sequences.
\newblock In {\em ESORICS}, 2010.

\bibitem{luo-ndss11}
Xiapu Luo, Peng Zhou, Edmond W.~W. Chan, Wenke Lee, Rocky K.~C. Chang, and
  Roberto Perdisci.
\newblock {HTTPOS}: Sealing information leaks with browser-side obfuscation of
  encrypted flows.
\newblock In {\em {NDSS}}, 2011.

\bibitem{panchenko-wpes11}
Andriy Panchenko, Lukas Niessen, Andreas Zinnen, and Thomas Engel.
\newblock Website fingerprinting in onion routing based anonymization networks.
\newblock In {\em WPES}, 2011.

\bibitem{tor-randomized-pipelining}
Mike Perry.
\newblock Experimental defense for website traffic fingerprinting.
\newblock
  https://blog.torproject.org/blog/experimental-defense-website-traffic-fingerprinting,
  September 2011.

\bibitem{perry-critique}
Mike Perry.
\newblock A critique of website traffic fingerprinting attacks.
\newblock
  \url{https://blog.torproject.org/blog/critique-website-traffic-fingerprinting-attacks},
  November 2013.

\bibitem{perry-tbdesign}
Mike Perry, Erinn Clark, and Steven Murdoch.
\newblock The design and implementation of the tor browser [draft].
\newblock https://www.torproject.org/projects/torbrowser/design/, March 2013.

\bibitem{sun-snp02}
Qixiang Sun, Daniel~R. Simon, Yi-Min Wang, Wilf Russell, Venkata~N.
  Padmanabhan, and Lili Qiu.
\newblock Statistical identification of encrypted web browsing traffic.
\newblock In {\em IEEE Security and Privacy}, 2002.

\bibitem{tor-website}
Tor project: Anonymity online.
\newblock \url{https://www.torproject.org/}, August 2011.

\bibitem{wang-wpes13}
Tao Wang and Ian Goldberg.
\newblock Improved website fingerprinting on tor.
\newblock In {\em Workshop on Privacy in the Electronic Society}, 2013.

\bibitem{wright-ndss09}
Charles~V. Wright, Scott~E. Coull, and Fabian Monrose.
\newblock Traffic morphing: An efficient defense against statistical traffic
  analysis.
\newblock In {\em NDSS}, 2009.

\bibitem{yu-tcj11}
Shui Yu, Wanlei Zhou, Weijia Jia, and Jiankun Hu.
\newblock Attacking anonymous web browsing at local area networks through
  browsing dynamics.
\newblock {\em The Computer Journal}, 2011.

\bibitem{zhang-wisec11}
Fan Zhang, Wenbo He, Xue Liu, and Patrick~G. Bridges.
\newblock Inferring users' online activities through traffic analysis.
\newblock In {\em ACM Conference on Wireless Network Security}, 2011.

\end{thebibliography}

\end{document}